\newcommand{\negA}{\vspace{-0.05in}}
\newcommand{\negB}{\vspace{-0.1in}}
\newcommand{\negC}{\vspace{-0.18in}}
\newcommand{\mysection}[1]{\negC\section{#1}\negA}
\newcommand{\mysubsection}[1]{\negB\subsection{#1}\negA}
\newcommand{\myparagraph}[1]{\par\smallskip\par\noindent{\bf{}#1:~}}
\newtheorem{obs}{Observation}
\newtheorem{cla}{Claim}
\newtheorem{mylemma}[theorem]{Lemma}
\newcommand{\alg}[1]{\mbox{\sf #1}}  
\newcommand{\comment}[1]{}
\begin{document}

\mainmatter

\title{Representative Families: A Unified Tradeoff-Based Approach}

\author{Hadas Shachnai \and Meirav Zehavi}

\institute{Department of Computer Science, Technion, Haifa 32000, Israel\\
\mails}

\maketitle

\vspace{-0.93em}

\begin{abstract}
Let$\ \!M\!=\!(\!E\!,{\cal I})\ \!$be a matroid, and let ${\cal S}$ be a family of subsets\\of size $p$ of $E$. A subfamily $\widehat{\cal S}\!\subseteq\!{\cal S}$ represents ${\cal S}$ if for every~pair~of~sets $X\!\in\!{\cal S}$ and $Y\!\subseteq\!E\!\setminus\! X$ such that $X\!\cup\!Y\!\in\!{\cal I}$, there is~a~set~$\widehat{X}\!\in\!\widehat{\cal S}$ disjoint from $Y$ such that $\widehat{X}\!\cup\!Y\!\in\!{\cal I}$.
Fomin et al. ({\em Proc. ACM-SIAM Symposium on Discrete Algorithms, 2014}) introduced a powerful technique for fast computation of representative families for uniform matroids. In this paper, we show that this technique leads to a unified approach for substantially improving the running times of parameterized algorithms for some classic problems. This includes, among others, {\sc $k$-Partial Cover}, {\sc $k$-Internal Out-Branching}, and {\sc Long Directed Cycle}. Our approach exploits an interesting tradeoff between running time and the size of the representative families.

\end{abstract}

\vspace{-0.8em}

\mysection{Introduction}

\vspace{-0.132em}

The theory of matroids is unique in the extent to which it connects such disparate branches
of combinatorial theory and algebra as graph theory,
combinatorial optimization, linear algebra, and algorithm theory.
Marx \cite{marx09} was the first to apply matroids to design fixed-parameter tractable algorithms.
The main tool used by Marx was the notion of representative families. Representative
families for set systems were introduced by Monien \cite{monien85}.

\comment{
A parameterized algorithm with parameter $k$ is an algorithm that runs in time $O^*(f(k))$ for some function $f$, where $O^*$ hides factors polynomial in the input size. A fast computation of representative families for uniform matroids
plays a pivotal role in obtaining better running times for parameterized algorithms.
Indeed, after
each stage, in which the algorithm computes a family $\cal S$ of sets that are partial solutions,\footnote{Typically, this is done in algorithms that use dynamic programming.} we compute a subfamily $\widehat{\cal S}\!\subseteq\!{\cal S}$ that represents $\cal S$. Then, each reference to ${\cal S}$ can be replaced by a reference to $\widehat{\cal S}$. The representative family $\widehat{\cal S}$
 contains ``enough" sets from ${\cal S}$; therefore, such replacement preserves the correctness of the algorithm. Thus, if we can compute fast representative families that are small enough, we can substantially improve the running time of the algorithm.
}
Let $E$ be a universe of $n$ elements, and ${\cal I}$ a family of subsets of size at most $k$ of $E$, for some $k\!\in\!\mathbb{N}$, i.e.,  ${\cal I}\!\subseteq\!\{S\!\subseteq\!E\!:\!|S|\!\leq\!k\}$. Then, $U_{n,k}\!=\!(E,\!{\cal I})$ is called a {\em uniform matroid}. Consider such a matroid and a family ${\cal S}$ of $p$-subsets of $E$, i.e., sets of size $p$. A subfamily $\widehat{\cal S}\!\subseteq\!{\cal S}$ {\em represents} ${\cal S}$ if for every pair of sets $X\!\in\!{\cal S}$ and $Y\!\subseteq\!E\setminus\! X$, such that $X\!\cup\!Y\!\in\!{\cal I}$ (i.e., $|Y|\!\leq\!(k\!-\!p)$), there is a set $\widehat{X}\!\in\!\widehat{\cal S}$ disjoint from $Y$.
In other words, if a set $Y$ can be extended to an independent set
(of size at most $k$) by adding a subset from $S$, then it can also be extended to an independent set
(of the same size) by adding a subset from $\widehat{\cal S}$.

The {\em Two Families Theorem} of Bollob$\acute{\mathrm{a}}$s \cite{bollobas65} implies that for any uniform~matroid $U_{n,k}\!=\!(E,\!{\cal I})$ and a family ${\cal S}$ of $p$-subsets of $E$, for some $1\!\leq\!p\!\leq\!k$, there~is~a subfamily $\widehat{\cal S}\!\subseteq\! {\cal S}$ of size ${k \choose p}$ that represents ${\cal S}$. For more general matroids, the generalization of Lov$\acute{\mathrm{a}}$sz for this theorem, given in \cite{lovasz77}, implies a similar result, and algorithms based on this generalization are given in \cite{marx09} and~\cite{representative}.

A parameterized algorithm with parameter $k$ has running time $O^*(f(k))$ for some function $f$, where $O^*$ hides factors polynomial in the input size. A fast computation of representative families for uniform matroids
plays a central role in obtaining better running times for such algorithms.
Indeed, after
each stage, in which the algorithm computes a family $\cal S$ of sets that are partial solutions,\footnote{Typically, this is done in algorithms that use dynamic programming.} we compute a subfamily $\widehat{\cal S}\!\subseteq\!{\cal S}$ that represents $\cal S$. 
Then, each reference to ${\cal S}$ can be replaced by a reference to $\widehat{\cal S}$. The representative family $\widehat{\cal S}$
 contains ``enough" sets from ${\cal S}$; therefore, such replacement preserves the correctness of the algorithm. Thus, if we can compute fast representative families that are small enough, we can substantially improve the running time of the algorithm.

For uniform matroids, Monien \cite{monien85} computed representative families of size $\sum_{i\!=\!0}^{k\!-\!p}p^i$ in time $O(|{\cal S}|p(k\!-\!p)\sum_{i\!=\!0}^{k\!-\!p}p^i)$, and Marx \cite{marx06} computed representative families of size ${k \choose p}$ in time $O(|{\cal S}|^2p^{k\!-\!p})$. Recently, Fomin et al. \cite{representative} introduced a powerful technique
which enables to
compute representative
 families of size ${k \choose p}2^{o(k)}\!\log n$ in time $O(|{\cal S}|(k/(k\!-\!p))^{k\!-\!p}2^{o(k)}\!\log n)$, thus significantly improving the previous results.

In this paper, we show that the technique of \cite{representative} leads to a unified tradeoff-based approach for substantially improving the running time of parameterized algorithms for some classic problems. In particular, we demonstrate the applicability of our approach, for the following problems, among others (see also Section \ref{sec:results}).


\vspace{-0.2em}

\myparagraph{$k$-Partial Cover ($k$-PC)} Given a universe $U$, a family $\cal S$ of subsets of $U$ and a parameter $k\!\in\!\mathbb{N}$, find the smallest number of sets in ${\cal S}$ whose union contains at least $k$ elements.

\vspace{-0.2em}

\myparagraph{$k$-Internal Out-Branching ($k$-IOB)} Given a {\em directed} graph $G\!=\!(V,\!E)$ and a parameter $k\!\in\!\mathbb{N}$, decide if $G$ has an {\em out-branching} (i.e., a spanning tree having exactly one node of in-degree 0) with at least $k$ nodes of out-degree $\geq 1$.

\vspace{-0.8em}

\mysubsection{Prior Work}

\vspace{-0.2em}

The $k$-PC problem generalizes the well-known {\sc $k$-Dominating Set ($k$-DS)} problem, defined as follows. Given a graph $G\!=\!(V,\!E)$ and a parameter $k\!\in\!\mathbb{N}$, find the smallest size of a set $U\!\subseteq\!V$ such that the number of nodes that belong to $U$ or are neighbors of nodes in $U$ is at least $k$. If {\sc $k$-PC} can be solved in time $t(|U|,\!|{\cal S}|,\!k)$, then {\sc $k$-DS} can be solved in time $t(|V|,\!|V|,\!k)$ (see, e.g., \cite{bonnet13}). Note that the special cases of {\sc $k$-PC} and {\sc $k$-DS} in which $k\!=\!n$, are the classical NP-complete {\sc Set Cover} and {\sc Dominating Set} problems \cite{subgraphisoneg}, respectively. Table~\ref{tab:knownresults2} presents a summary of known parameterized algorithms for {\sc $k$-PC} and {\sc $k$-DS}. We note that the parameterized complexity of {\sc $k$-PC} and {\sc $k$-DS} has been studied also~with respect to other parameters and for more restricted inputs (see, e.g., \cite{bonnet13,subexponentialcov,approxcover}).

\begin{table}[center]
\centering
\begin{tabular}{|l|c|c|c|}
	\hline
	Reference 		                      & Deterministic$\setminus$Randomized & Variant             & Running Time           \\ \hline \hline		
	Bonnet et al.~\cite{bonnet13}	  		& {\em det} 																 & $k$-PC          		 & $O^*(4^kk^{2k})$       \\ \hline

	Bl$\ddot{\mathrm{a}}$ser~\cite{blaser03} &  {\em rand} 														 & $k$-PC          		 & $O^*(5.437^k)$         \\ \hline	

	Kneis et al.~\cite{sofsem07}				& {\em det} 																 & $k$-DS          		 & $O^*((16+\epsilon)^k)$ \\
																			&  {\em rand} 																 & $k$-DS          		 & $O^*((4+\epsilon)^k)$   \\ \hline

	Chen et al.~\cite{chenchen13}			  & {\em det} 																 & $k$-DS          		 & $O^*(5.437^k)$          \\ \hline

	Kneis~\cite{kneisthesis}						& {\em det} 																 & $k$-DS          		 & $O^*((4+\epsilon)^k)$  \\ \hline
	
	Koutis et al.~\cite{appmultilinear} & {\em rand}																   & $k$-DS 						 & $O^*(2^k)$             \\ \hline

	{\bf This paper}                    & {\bf det}                      			 & $\bf k${\bf-PC}     & $\bf O^*(2.619^k)$     \\ \hline
	
\end{tabular}\smallskip
\caption{Known parameterized algorithms for {\sc $k$-PC} and {\sc $k$-DS}.}
\label{tab:knownresults2}
\end{table}

The {\sc $k$-IOB} problem is of interest in database systems \cite{outbranchpatent}. A special case of {\sc $k$-IOB}, called {\sc $k$-Internal Spanning Tree ($k$-IST)}, asks if a given {\em undirected} graph $G\!=\!(V,\!E)$ has a spanning tree with at least $k$ internal nodes. An interesting application of {\sc $k$-IST}, for connecting cities with water pipes, is given in \cite{kISPbounddeg}. The {\sc $k$-IST} problem is NP-complete, since it generalizes the {\sc Hamiltonian Path} problem \cite{hamil3}; thus, {\sc $k$-IOB} is also NP-complete. Table~\ref{tab:knownresults1} presents a summary of known parameterized algorithms for {\sc $k$-IOB} and {\sc $k$-IST}. More details on {\sc $k$-IOB},

{\noindent{\sc $k$-IST} and variants of these problems can be found in the excellent surveys of \cite{iobsurv1,iobsurv2}.}

\begin{table}[center]
\centering
\begin{tabular}{|l|c|c|c|}
	\hline
	Reference 		                    & Deterministic$\setminus$Randomized & Variant          & Running Time          \\ \hline \hline		
	Priesto et al.~\cite{kISP24klogk}	& {\em det} 																 & $k$-IST          & $O^*(2^{O(k\log k)})$ \\ \hline	
	Gutin al.~\cite{kIOB2klogk}	      & {\em det} 																 & $k$-IOB          & $O^*(2^{O(k\log k)})$ \\ \hline
	Cohen et al.~\cite{kIOB49k}	      & {\em det} 																 & $k$-IOB          & $O^*(55.8^k)$         \\		
	 														      & {\em rand} 																 & $k$-IOB          & $O^*(49.4^k)$         \\ \hline	
	Fomin et al.~\cite{kIOB16k}	      & {\em det} 																 & $k$-IOB          & $O^*(16^{k+o(k)})$		\\ \hline	
	Fomin et al.~\cite{kISP8k}	      & {\em det} 																 & $k$-IST          & $O^*(8^k)$            \\ \hline	
	Zehavi~\cite{ipec13}					 	  & {\em rand} 																 & $k$-IOB          & $O^*(4^k)$            \\ \hline
	{\bf This paper}                  & {\bf det}                      			 & $\bf k${\bf-IOB} & $\bf O^*(6.855^k)$    \\ \hline
	
\end{tabular}\smallskip
\caption{Known parameterized algorithms for {\sc $k$-IOB} and {\sc $k$-IST}.}
\label{tab:knownresults1}
\end{table}

\vspace{-0.2em}

\mysubsection{Our Results}
\label{sec:results}

Given a uniform matroid $U_{n, k}\!=\!(E,{\cal I})$ and a family ${\cal S}$ of $p$-subsets of $E$, we compute a subfamily $\widehat{\cal S}\!\subseteq\!{\cal S}$ of size $\displaystyle{\frac{(ck)^{k}}{p^p(ck\!-\!p)^{k\!-\!p}}2^{o(k)}\log n}$ which represents ${\cal S}$, in time $\displaystyle{O(|{\cal S}|((ck)/(ck\!-\!p))^{k\!-\!p}}2^{o(k)}\log n)$, for any fixed $c\!\geq\!1$.
Indeed, taking $c\!=\!1$, we have 
the result of
Fomin et al. \cite{representative}.
As $c$ grows larger,
the size of $\widehat{\cal S}$ increases, with a corresponding decrease in computation time.
This enables to obtain better running times for the algorithms for {\sc Long Directed Cycle}, {\sc Weighted $k$-Path} and {\sc Weighted $k$-Tree}, as given in \cite{representative}.

In particular, we use this approach to develop {\em deterministic} algorithms that solve {\sc $k$-PC} and {\sc $k$-IOB} in times $O^*(2.619^k)$ and $O^*(6.855^k)$, respectively. We~thus significantly improve the {\em randomized} algorithm with the best known $O^*(5.437^k)$ running time for {\sc $k$-PC} \cite{blaser03}, and the deterministic algorithm with the best known $O^*(16^{k+o(k)})$ running time for {\sc $k$-IOB} \cite{kIOB16k}. This also improves the running times of the best known deterministic algorithms for {\sc $k$-DS} and {\sc $k$-IST} (see Tables \ref{tab:knownresults2} and \ref{tab:knownresults1}).

\myparagraph{Technical Contribution} Our unified approach exploits an interesting tradeoff between running time and the size of the representative families. This tradeoff is made precise by using, along with the scheme of \cite{representative}, a parameter $c\!\geq\! 1$, which enables a more careful selection of elements to the sets.

Indeed, towards computing a representative family $\widehat{\cal S}$, we seek a family ${\cal F}\!\subseteq\!2^E$ that satisfies the following condition. For every pair of sets $X\!\in\!{\cal S}$, and $Y\!\subseteq\!E\!\setminus\! X$ such that $X\!\cup\! Y\!\in\!{\cal I}$, there is a set $F\!\in\!\cal F$ such that $X\!\subseteq\! F$, and $Y\!\cap\! F\!=\!\emptyset$ (see Fig.~\ref{fig:1}). Then, we compute $\widehat{\cal S}$ by iterating over all $S\!\in\!{\cal S}$ and $F\!\in\!{\cal F}$ such that $S\!\subseteq\!F$. The time complexity of this iterative process is the dominant factor in the overall running time. Thus, we seek a small family $\cal F$, such that for any $S\!\in\!{\cal S}$, the expected number of sets in $\cal F$ containing $S$ is small. In constructing each set $F\!\in\!{\cal F}$, we insert each element $e\!\in\!E$ to $F$ with probability $p/(ck)$. For $c\!=\!1$, we get the probability used in \cite{representative}. When we take a larger value for $c$, we need to construct a larger family $\cal F$ in order to satisfy the above condition. Yet, since elements in $E$ are inserted to sets in $\cal F$ with a smaller probability, we get that for any $S\!\in\!{\cal S}$, the expected number of sets in $\cal F$ containing $S$ is smaller.

\begin{figure}
\medskip
\centering
\frame{
\includegraphics[scale=0.29]{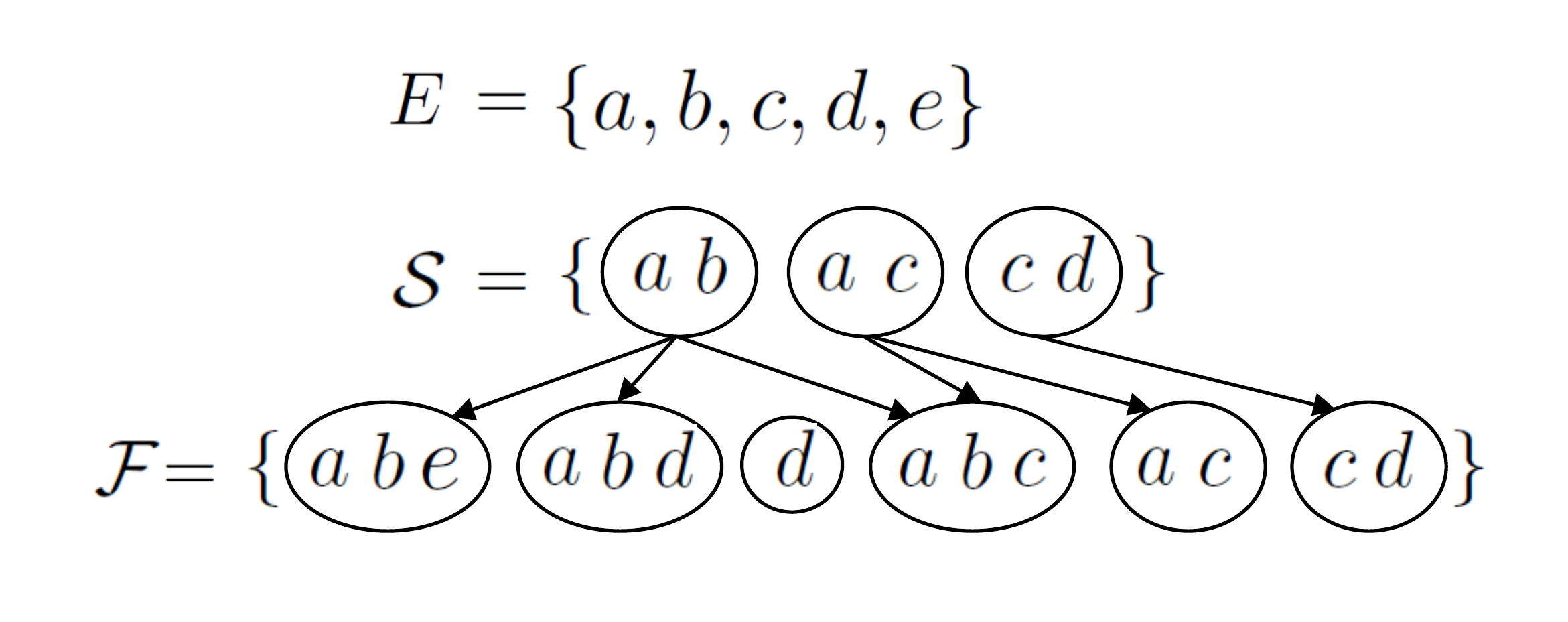}
}
\caption{An illustration of a family ${\cal F}\!\subseteq\!2^E$. Assume that $n\!=\!5,k\!=\!4$, and $p\!=\!2$. An arrow from a set $S\!\in\!{\cal S}$ to a set $F\!\in\!{\cal F}$ indicates that $S\!\subseteq\!F$.}
\label{fig:1}
\medskip
\end{figure}

\myparagraph{Organization} Section \ref{section:preliminaries} gives some definitions and notation. Section \ref{section:computation} presents a tradeoff between running time and the size of the representative families. Using this computation, we derive in Sections \ref{section:pc} and \ref{section:kiob} our main results, which are fast parameterized algorithms for {\sc $k$-PC} and {\sc $k$-IOB}. Finally, Section \ref{section:knownapp} shows the improvements in running times resulting from our tradeoff-based approach for three previous applications of representative families of \cite{representative}.

\mysection{Preliminaries}\label{section:preliminaries}

\vspace{-0.2em}

We first define the weighted version of representative families.

\vspace{-0.2em}

\begin{definition}\label{def:repfam}
Given a matroid $U_{n,k}\!=\!(E,\!{\cal I})$, a family ${\cal S}$ of $p$-subsets of $E$, and a function $w:\!{\cal S}\!\rightarrow\!\mathbb{R}$, we say that a subfamily $\widehat{\cal S}\!\subseteq\!{\cal S}$ {\em max (min) represents} $\cal S$ if for every pair of sets $X\in{\cal S}$, and $Y\!\subseteq\!E\!\setminus\!X$ such that $X\cup Y\in{\cal I}$, there is a set $\widehat{X}\!\in\!\widehat{\cal S}$ disjoint from $Y$ such that $w(\widehat{X})\!\geq\!w(X)$ $(w(\widehat{X})\!\leq\!w(X))$.
\end{definition}
We give an illustration of a representative family in Fig.~\ref{fig:2} (see Appendix~\ref{app:illustration}).~The special case where $w(\!S\!)\!=\!0$, for all $S\!\in\!{\cal S}$, is the unweighted version of Definition~\ref{def:repfam}.

The following simple observation (used in Sections \ref{section:pc} and \ref{section:kiob}) asserts that representation is a transitive relation among families of subsets.

\vspace{-0.2em}

\begin{obs}[\cite{representative}]\label{obs:transitive}
Let $U_{n,k}\!=\!(E,\!{\cal I})$ be a matroid. Let ${\cal S},\!{\cal T}$ and ${\cal R}$ be families of $p$-subsets of $E$, and let $w$ be a function from ${\cal S}\cup{\cal T}\cup{\cal R}$ to $\mathbb{R}$. If ${\cal S}$ max (min) represents ${\cal T}$ and ${\cal T}$ max (min) represents ${\cal R}$, then ${\cal S}$ max (min) represents $\cal R$.
\end{obs}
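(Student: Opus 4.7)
My plan is to prove this by directly unpacking the definition of max-representation twice in sequence; the min case will follow by an entirely symmetric argument (or by replacing $w$ with $-w$), so I would simply remark on it at the end.

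First, I would fix an arbitrary pair $X \in {\cal R}$ and $Y \subseteq E \setminus X$ with $X \cup Y \in {\cal I}$, since this is the data appearing in the conclusion ``${\cal S}$ max represents ${\cal R}$''. The goal is to exhibit some $\widehat{X} \in {\cal S}$ that is disjoint from $Y$, satisfies $\widehat{X} \cup Y \in {\cal I}$, and has $w(\widehat{X}) \ge w(X)$.

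Next, I would apply the hypothesis that ${\cal T}$ max represents ${\cal R}$ to this very pair $(X,Y)$. This yields an intermediate set $X' \in {\cal T}$ with $X' \cap Y = \emptyset$, $X' \cup Y \in {\cal I}$, and $w(X') \ge w(X)$. The key observation is that the pair $(X',Y)$ now satisfies precisely the hypotheses needed to invoke the second representation property: $X' \in {\cal T}$, $Y \subseteq E \setminus X'$, and $X' \cup Y \in {\cal I}$. Applying the hypothesis that ${\cal S}$ max represents ${\cal T}$ to $(X',Y)$ then delivers a set $\widehat{X} \in {\cal S}$ disjoint from $Y$ with $\widehat{X} \cup Y \in {\cal I}$ and $w(\widehat{X}) \ge w(X')$. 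Chaining the two weight inequalities gives $w(\widehat{X}) \ge w(X') \ge w(X)$, as desired.

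There is no real obstacle here; the only subtlety worth flagging is that the two invocations of Definition~\ref{def:repfam} must carry along the independence condition $X' \cup Y \in {\cal I}$ (not just disjointness), since without it the second invocation's hypothesis would fail. I would therefore write the definition-application step slightly pedantically, making explicit at each stage that the witness set together with $Y$ is independent. Finally, I would close by noting that replacing every ``$\ge$'' above by ``$\le$'' gives the min-representation case verbatim, so both statements of the observation are handled simultaneously.
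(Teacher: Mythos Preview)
Your proposal is correct and is exactly the natural two-step unfolding of Definition~\ref{def:repfam}; note that the paper itself does not supply a proof for this observation but simply cites it from \cite{representative}, so there is nothing to compare against. One minor remark: since all families consist of $p$-subsets and the matroid is $U_{n,k}$, the independence condition $X'\cup Y\in{\cal I}$ that you flag as a subtlety is automatic once $X'\cap Y=\emptyset$ (because $|X'\cup Y|=p+|Y|\le k$), so the definition's weaker conclusion of mere disjointness already suffices for the chaining.
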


\vspace{-0.2em}

\myparagraph{Notation} Given a set $U$ and a nonnegative integer $t$, let ${U \choose t}\!=\!\{U'\!\subseteq\! U\!:\! |U'|\!=\!t\}$. Also, recall that an out-tree $T$ is a directed tree having exactly one node of in-degree 0, called {\em the root}. We denote by $V_T$, $E_T$, $i(T)$ and $\ell(T)$ the node set, edge set, number of internal nodes (i.e., nodes of out-degree $\geq\!1$) and number of leaves (i.e., nodes of out-degree 0), respectively.

\mysection{A Tradeoff-Based Approach}\label{section:computation}

In this section we prove the following result.

\vspace{-0.2em}

\begin{theorem}\label{theorem:newrep}
Given a parameter $c\!\geq\!1$, a uniform matroid $U_{n,k}\!=\!(E,\!{\cal I})$, a family ${\cal S}$ of $p$-subsets of $E$, and a function $w\!:\!{\cal S}\!\rightarrow\!\mathbb{R}$,  a family $\widehat{\cal S}\!\subseteq\!{\cal S}$ of size $\displaystyle{\frac{(ck)^{k}}{p^p(ck\!-\!p)^{k\!-\!p}}\!2^{o(k)}\!\log\!n}$~that max (min) represents $\cal S$ can be found in time $\displaystyle{O\!(|{\cal S}|(ck/(ck\!-\!p))^{k\!-\!p}2^{o(k)}\!\log\!n\! +\! |{\cal S}|\!\log\!|{\cal S}|)}$.
\end{theorem}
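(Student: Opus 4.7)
The plan is to adapt the scheme of Fomin~et~al.~\cite{representative}, replacing their inclusion probability $p/k$ by the $c$-biased probability $p/(ck)$ in order to expose the tradeoff promised by the theorem.

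First I would produce a family ${\cal F} \subseteq 2^E$ with the following separating property: for every $X \in {\cal S}$ and every $Y \subseteq E \setminus X$ with $|Y| \leq k-p$, some $F \in {\cal F}$ satisfies $X \subseteq F$ and $Y \cap F = \emptyset$. To see what size suffices, consider the random construction in which each $e \in E$ is placed in $F$ independently with probability $p/(ck)$: a fixed pair $(X,Y)$ of disjoint sets of sizes $p$ and $\leq k-p$ is then separated with probability at least $(p/(ck))^p (1 - p/(ck))^{k-p} = p^p (ck-p)^{k-p}/(ck)^{k}$. A union bound over all such pairs, followed by derandomization via $(n,k)$-splitters or perfect hash families (exactly as in~\cite{representative}, but with bias $p/(ck)$ instead of $p/k$), yields a deterministic $\cal F$ of size $\frac{(ck)^{k}}{p^p (ck-p)^{k-p}} 2^{o(k)} \log n$.

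Next I would construct $\widehat{\cal S}$ as follows: sort $\cal S$ by $w$ in time $O(|{\cal S}| \log |{\cal S}|)$, then, scanning $S \in {\cal S}$ in that order, enumerate every $F \in {\cal F}$ with $S \subseteq F$ and record $S$ as the current candidate at $F$; the output $\widehat{\cal S}$ is the set of final candidates, one per $F$. Correctness is immediate from Definition~\ref{def:repfam}: given $X$ and $Y$ as in the definition, pick $F \in {\cal F}$ separating the pair; the representative stored at $F$ is a subset of $F$, hence disjoint from $Y$, and by the sorted traversal has weight at least (resp.~at most) $w(X)$. The size bound $|\widehat{\cal S}| \leq |{\cal F}|$ is then trivial.

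For the running time, the dominant term is the enumeration of pairs $(S, F)$ with $S \subseteq F$. Summed over $S$, this count matches its probabilistic expectation $|{\cal F}| \cdot (p/(ck))^p = (ck/(ck-p))^{k-p} 2^{o(k)} \log n$ up to the $2^{o(k)}$ factor, a property enforced by the splitter construction, so the total time is $O(|{\cal S}|(ck/(ck-p))^{k-p} 2^{o(k)} \log n + |{\cal S}|\log|{\cal S}|)$, as required. The main obstacle I expect is not the probabilistic estimate, which is routine, but the combinatorial bookkeeping needed to certify that the deterministic splitter/hash-family machinery of~\cite{representative} carries through at bias $p/(ck)$ while simultaneously controlling $|{\cal F}|$ and the number of $F$'s containing any fixed $p$-set within only a $2^{o(k)}$ overhead; once this is verified, plugging $c = 1$ recovers the original bounds of~\cite{representative} as a sanity check.
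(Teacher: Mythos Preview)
Your proposal is correct and follows essentially the same approach as the paper: both replace the inclusion probability $p/k$ of~\cite{representative} by $p/(ck)$, verify the separation and containment probabilities give the stated $\mathcal{C}$ and $\Delta$ bounds, extract $\widehat{\cal S}$ via the sort-and-scan procedure over $\mathcal{F}$, and then appeal to the separator-refinement machinery of~\cite{representative} (their Lemmas~4.4 and~4.5) for the derandomization. The paper carries out the last step explicitly---first brute-forcing a small-universe separator and then applying those two lemmas three more times to drive the overhead down to $2^{o(k)}$---which is precisely the ``combinatorial bookkeeping'' you flag as the main obstacle; your diagnosis of where the work lies is accurate.
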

Note that, in the special case where $c\!=\!1$, we have the statement of Theorem 6 in \cite{representative}. Roughly speaking, the proof of Theorem \ref{theorem:newrep} is structured as follows. We first argue that we can focus on finding a certain data structure to compute representative families. Then, we construct such a data structure that is not as efficient as required (first randomly, and then deterministically). This part contains our main contribution. Finally, we show how to improve the {\em ``efficiency"} of this data structure (this is made precise below).

\vspace{-0.1em}

\begin{proof}
Clearly, we may assume that $|{\cal S}|\geq \displaystyle{\frac{(ck)^{k}}{p^p(ck\!-\!p)^{k\!-\!p}}2^{o(k)}\!\log\!n}$. Recall that our computation of representative families requires finding initially a family ${\cal F}\!\subseteq\!2^E$ that satisfies the following condition. For every pair of sets $X\!\in\!{\cal S}$, and $Y\!\subseteq\!E\!\setminus\!X$, such that $X\!\cup\!Y\!\in\!{\cal I}$, there is a set $F\!\in\!{\cal F}$ such that $X\!\subseteq\! F$, and $Y\!\cap\! F\!=\!\emptyset$ (see Fig.~\ref{fig:1}). An {\em $(n,\!k,\!p)$-separator} is a data structure containing such a family ${\cal F}$, which, given a set $S\!\in\!{E \choose p}$, outputs the subfamily of sets in $\cal F$ that contain $S$, i.e., $\chi(S)\!=\!\{F\!\in\!{\cal F}\!: S\!\subseteq\!F\}$.

To derive our fast computation, we need to construct an efficient $(\!n,\!k,\!p\!)$-separator, where efficiency is measured by the following parameters: ${\cal C}\!=\!{\cal C}(n,\!k,\!p)$, the number of sets in the family $\cal F$; $\tau_{\cal F}\!=\!\tau_{\cal F}(n,\!k,\!p)$, the time required to compute the family ${\cal F}$; $\Delta\!=\!\Delta(n,\!k,\!p)$, the maximum size of $\chi(S)$, for any $S\!\in\! {E \choose p}$, and~$\tau_{\chi}\!=$\\$\tau_{\chi}(n,\!k,\!p)$, an upper bound for the time required to output $\chi(S)$, for any $S\!\in\!{E \choose p}$.

Given such a separator, a subfamily $\widehat{\cal S}\!\subseteq\! {\cal S}$ of size ${\cal C}$ that max $(\!\mathrm{min}\!)$ represents ${\cal S}$ can be constructed in time $O(\!\tau_{\cal F} \!+\! |{\cal S}|\tau_{\chi} \!+ \!|{\cal S}|\!\log\!|{\cal S}|\!)$ as follows.~First,~compute~$\cal F$, and $\chi(\!S\!)$ for all $S\!\in\!{\cal S}$.~Then, order ${\cal S}\!=\!\{\!S_1,\!...,\!S_{|{\cal S}|}\!\}$, such that $w(\!S_{i\!-\!1}\!)\!\geq\! w\!(\!S_i\!)$~$(\!w\!(\!S_{i\!-\!1}\!)$\\$\!\leq\! w(S_i))$, for all $2\!\leq\! i\!\leq\! |{\cal S}|$. Finally, return all $S_i\!\in\!{\cal S}$ for which there is a set $F\!\in\!{\cal F}$ containing $S_i$ but no $S_j$, for $1\!\leq\!j\!<\!i$. Formally, return $\widehat{\cal S}\!=\!\{S_i\!\in\!{\cal S}\!:\! \chi(S_i)\!\setminus\!(\bigcup_{1\leq j<i}\chi(S_j))\!\neq\!\emptyset\}$. The correctness of this construction is proved in \cite{representative}. Thus, to prove the theorem it suffices to find an $(n,\!k,\!p)$-separator with~parameters:

\vspace{-0.35em}
\begin{itemize}
\item $\displaystyle{{\cal C}^* \leq \frac{(ck)^{k}}{p^p(ck-p)^{k-p}}2^{o(k)}\log n}$.
\item $\displaystyle{\tau_{\cal F}^* \leq \frac{(ck)^{k}}{p^p(ck-p)^{k-p}}2^{o(k)}n\log n}$.
\item $\displaystyle{\tau_{\chi}^* \leq (ck/(ck-p))^{k-p}2^{o(k)}\log n}$.
\end{itemize}
We start by giving an $(n,\!k,\!p)$-separator, that we call Separator 1, with the following parameters, which are worse than required:

\vspace{-0.35em}
\begin{itemize}
\item $\displaystyle{{\cal C}^1=O(\frac{(ck)^{k}}{p^p(ck-p)^{k-p}}k^{O(1)}\log n)}$.
\item $\displaystyle{\tau_{\cal F}^1=O({2^n\choose {\cal C}^1}n^{O(k)})}$.
\item $\displaystyle{\Delta^1=O((ck/(ck-p))^{k-p}k^{O(1)}\log n)}$.
\item $\displaystyle{\tau_{\chi}^1=O(\frac{(ck)^{k}}{p^p(ck-p)^{k-p}}n^{O(1)})}$.
\end{itemize}
First, we give a randomized algorithm which constructs, with positive probability, an $(n,\!k,\!p)$-separator having the desired ${\cal C}^1$ and $\Delta^1$ parameters. We then show how to deterministically construct an $(n,\!k,\!p)$-separator having all the desired parameter values. Let $\displaystyle{t\!=\!\frac{(ck)^{k}}{p^p(ck-p)^{k-p}}(k\!+\!1)\!\ln\!n}$, and construct the family ${\cal F}\!=\!\{F_1,\!\ldots,\!F_t\}$ as follows. For each $i\!\in\!\{1,\!\ldots,\!t\}$ and element $e\!\in\!E$, insert $e$ to $F_i$ with probability $p/(ck)$. The construction of different sets in ${\cal F}$, as well as the insertion of different elements into each set in ${\cal F}$, are independent. Clearly, ${\cal C}^1\!=\!t$ is within the required bound.

For fixed sets $X\!\in\!{E \choose p}$, $Y\!\in\!{E\setminus X\choose k-p}$ and $F\!\in\!{\cal F}$, the probability that $X\!\subseteq\!F$ and $Y\!\cap\! F\!=\!\emptyset$ is $\displaystyle{(\frac{p}{ck})^p(1\!-\!\frac{p}{ck})^{k\!-\!p}\! = \!\frac{p^p(ck\!-\!p)^{k\!-\!p}}{(ck)^{k}}\! =\! (k\!+\!1)\!\ln\!n/t}$. Thus, the probability that {\em no} set $F\!\in\!{\cal F}$ satisfies $X\subseteq F$ and $Y\cap F=\emptyset$ is $\displaystyle{(1\! -\! (k\!+\!1)\!\ln n/t)^t\!\leq\! e^{-(k+1)\!\ln\!n}\!}$\\$=\!n^{-k-1}$. There are at most $n^k$ choices for $X\!\in\!{E \choose p}$ and $Y\!\in\!{E\setminus X\choose k-p}$; thus, applying the union bound, the probability that there exist $X\!\in\!{E \choose p}$ and $Y\!\in\!{E\setminus X \choose k-p}$, such that no set $F\!\in\!{\cal F}$ satisfies $X\!\subseteq\!F$ and $Y\!\cap\!F\!=\!\emptyset$, is at most $n^{-k-1}\cdot n^k=\frac{1}{n}$.

For any sets $S\!\in\!{E \choose p}$ and $F\!\in\!{\cal F}$, the probability that $S\!\subseteq\!F$ is $(p/(\!ck\!))^p$.~Therefore, $|\chi(\!S\!)|$, the number of sets in $\cal F$ containing $S$, is a sum of $t$ i.i.d.~Bernoulli random variables with parameter $(p/(\!ck\!))^p$. Then, the expected value of $|\chi(\!S\!)|$ is $\displaystyle{E[|\chi(\!S\!)|]\!=\!t(\!\frac{p}{ck}\!)^p}\!=\!(\!\frac{ck}{ck\!-\!p}\!)^{k\!-\!p}\!(\!k\!+\!1\!)\!\ln n$. Applying standard Chernoff~bounds, we have that the probability that $|\chi(\!S\!)|\!\geq\!6E[|\chi(\!S\!)|]$ is upper bounded by $2^{-\!6E[|\chi(\!S\!)|]}$\\$\leq\! n^{-\!k-1}$. There are ${n\choose p}$ choices for $S\!\in\!{E \choose p}$. Thus, by the union bound, the probability that $\displaystyle{\Delta^1 \!>\! 6\!\cdot\![((\!ck\!)/(\!ck\!-\!p\!))^{k\!-\!p}(\!k\!+\!1\!)\!\ln n]}$ is upper bounded by $\frac{1}{n}$.

So far, we have given a randomized algorithm that constructs an $(n,\!k,\!p)$-separator having the desired ${\cal C}^1$ and $\Delta^1$ parameters with probability at least $1\!-\!\frac{2}{n}\!>\!0$. To deterministically construct $\cal F$ in time bounded by $\tau_{\cal F}^1$, we iterate over all families of $t$ subsets of $E$ (there are ${2^n\choose {\cal C}^1}$ such families), where for each family $\cal F$, we test in time $n^{O(k)}$ whether $\Delta^1$ is within the required bound, and whether for any pair of sets $X\!\in\!{E \choose p}$ and $Y\!\in\!{E\setminus X\choose k-p}$, there is a set $F\!\in\!{\cal F}$ such that $X\!\subseteq\!F$ and $Y\!\cap\!F\!=\!\emptyset$. Then, given a set $S\!\in\!{E \choose p}$, we can deterministically compute $\chi(S)$ within the stated bound for $\tau_{\chi}^1$, by iterating over ${\cal F}$ and inserting each set that contains $S$.


To obtain an $(n,\!k,\!p)$-separator having the parameters ${\cal C}^*$, $\tau_{\cal F}^*$ and $\tau_{\chi}^*$, repeatedly apply Lemmas 4.4 and 4.5 of \cite{representative} to Separator 1 (see Appendix \ref{app:theoremcont}).\qed
\end{proof}
We note that the scheme for computing representative families, developed in the proof of Theorem \ref{theorem:newrep}, consists of three main stages:
\vspace{-0.1em}
\begin{enumerate}
\item Construct several ``small" inefficient separators.
\item Given the results of Stage 1, construct an efficient $(n,k,p)$-separator.
\item Use the separator generated in Stage 2 to construct a representative family.
\end{enumerate}
\vspace{-0.1em}
{\noindent We give a pseudocode of the scheme, called \alg{RepAlg}, in Appendix \ref{app:scheme}.}

\mysection{An Algorithm for {\sc $k$-Partial Cover}}\label{section:pc}

We now show how to apply our scheme, \alg{RepAlg}, to obtain a faster parameterized algorithm for {\sc $k$-PC}. Let $m\!=\!|{\cal S}|$ be the number of sets in ${\cal S}$. The main idea of the algorithm is to iterate over the sets in ${\cal S}$ in some arbitrary~order $S_1,\!S_2,\!...,\!S_m$, such that when we reach a set $S_i$, we have already computed representative families for families of ``partial solutions" that include only elements from the sets $S_1\!,\!...,\!S_{i\!-\!1}$.~Then, we try to extend the partial solutions by adding {\em uncovered} elements from $S_i$.~The key observation, that leads to our improved running time, is that we cannot simply add ``many" elements from $S_i$ at once, but rather add these elements one-by-one; thus, we can compute new representative families after adding each element, which are then used in adding the next element.

\myparagraph{The Algorithm} We now describe  \alg{PCAlg}, our algorithm for {\sc $k$-PC} (see the pseudocode below). The first step solves the simple case where the solution is '1'. Then, algorithm \alg{PCAlg} generates a matrix M, where each entry M$[i,\!j,\!\ell]$ holds a family that represents  $Sol_{i,j,\ell}$, the family of partial solutions including $j$ elements, obtained from a subfamily ${\cal S}'$ of $\ell$ sets in $\{S_1,\!...,S_i\}$, i.e., $Sol_{i,j,\ell} = \{S\!\subseteq\!(\bigcup{\cal S'}): {\cal S'}\!\subseteq\!\{S_1,\!...,S_i\}, |S|\!=\!j, |{\cal S'}|\!=\!\ell\}$.

Algorithm \alg{PCAlg} iterates over all triples $(i,\!j,\!\ell)$, where $i\!\in\!\{\!1,\!...,\!m\}, j\!\in\!\{\!1,\!...,\!k\}$ and $\ell\!\in\!\{\!1,\!...,\min\{i,\!k\}\}$. In each iteration, corresponding to a triple $(\!i,\!j,\!\ell\!)$, \alg{PCAlg} computes M$[i,\!j,\!\ell]$, by using M$[i\!-\!1,\!j',\!\ell\!-\!1]$, for all $1\!\leq\! j'\!\leq\! j$, and M$[i\!-\!1,\!j,\!\ell]$. In other words, \alg{PCAlg} computes a family that represents $Sol_{i,j,\ell}$ by using families that represent $Sol_{i\!-\!1,j',\ell\!-\!1}$, for all $1\!\leq j'\!\leq j$, and $Sol_{i\!-\!1,j,\ell}$. In particular, algorithm \alg{PCAlg} adds elements in $S_i$ one-by-one to sets in M$[i\!-\!1,\!j',\!s\!-\!1]$, for all $1\!\leq\! j'\!\leq\! j$. After adding an element, \alg{PCAlg} computes (in Step \ref{pcalg:new3}) new representative families, to be used in adding the next element. Let $S_i\!=\!\{s_1,\!...,s_r\}$. Then, \alg{PCAlg} computes a family ${\cal A}_{r',\!j'}$, for all $1\!\leq\! r'\!\leq\! r$ and $0\!\leq\! j'\!\leq\! j$, that represents the family of partial solutions including $j'$ elements, obtained from $\{s_1,\!...,s_{r'}\}$ and $(\ell\!-\!1)$ sets in $\{S_1,\!...,\!S_{i\!-\!1}\}$. The family ${\cal A}_{r',\!j'}$ is computed by calling \alg{RepAlg} with the {\em family parameter} containing the union of ${\cal A}_{r'\!-\!1,\!j'}$ and the family of sets obtained by adding $s_{r'}$ to sets in ${\cal A}_{r'\!-\!1,\!j'\!-\!1}$.

Suppose the solution is $\ell^*$. Then, using representative families guarantees that each entry M$[i,\!j,\!\ell]$ holds ``enough" sets from $Sol_{i,j,\ell}$, such that when the algorithm terminates, M$[m,\!k,\!\ell^*]\!\neq\!\emptyset$. Moreover, using representative families guarantees that each entry M$[i,\!j,\!\ell]$ does not hold ``too many" sets from $Sol_{i,j,\ell}$, thereby yielding an improved running time.

\setlength{\textfloatsep}{1em}

\begin{algorithm}[!ht]
\caption{\alg{PCAlg}($U,k,{\cal S}=\{S_1,\ldots,S_m\}$)}
\begin{algorithmic}[1]\label{alg:cov}
\STATE\label{pcalg:1} {\bf if} there is $S\in{\cal S}$ s.t. $|S|\geq k$ {\bf then} {\bf return} 1. {\bf end if}

\STATE\label{pcalg:2} let M be a matrix that has an entry $[i,j,\ell]$ for all $0\!\leq\! i\!\leq\! m,1\!\leq\! j\!\leq\! k$ and $0\!\leq\! \ell\!\leq\! k$, initialized to $\emptyset$.

\FOR{$i=1,\ldots,m,\ j=1,\ldots,k,\ \ell=1,\ldots,\min\{i,\!k\}$}\label{pcalg:ite}	
	\STATE\label{pcalg:new0} let $S_i=\{s_1,\ldots,s_r\}$.
	
	
	\STATE\label{pcalg:new1} $\displaystyle{{\cal A}_{0,0}\!\Leftarrow\! \{\emptyset\}}$, and {\bf for} $j'\!=\!1,\!\ldots,\!j$ {\bf do} $\displaystyle{{\cal A}_{0,\!j'}\!\Leftarrow\! \mathrm{M}[i\!-\!1,\!j',\!\ell\!-\!1]}$. {\bf end for}
	
	\FOR{$r'\!=\!1,\ldots,r$, $j'\!=\!0,\ldots j$}\label{pcalg:new2}
		\STATE\label{pcalg:new3} ${\cal A}_{r',j'}\Leftarrow\ $\alg{RepAlg}$\displaystyle{(U,k,{\cal A}_{r'\!-\!1,j'}\cup\{S\!\cup\!\{s_{r'}\}: j'\!\geq\! 1, \ S\!\in\! {\cal A}_{r'\!-\!1,j'\!-\!1}, \ s_{r'}\!\notin\! S\})}$.
	\ENDFOR\label{pcalg:new4}
	

	
	\STATE\label{pcalg:4} $\displaystyle{\mathrm{M}[i,j,\ell]\Leftarrow}$ \alg{RepAlg}$\displaystyle{(U,k,\mathrm{M}[i-1,j,\ell]\cup{\cal A}_{r,j})}$.
\ENDFOR

	\STATE\label{pcalg:5} return the smallest $\ell$ such that M$[m,\!k,\!\ell]\!\neq\!\emptyset$.
\end{algorithmic}
\end{algorithm}

\myparagraph{Correctness and Running Time} We first state a lemma referring to Steps \ref{pcalg:new1}--\ref{pcalg:new4} in \alg{PCAlg} (we give the proof in Appendix \ref{app:lemmacoveradd}). In this lemma, we use the following notation. For all $0\!\leq\! i\!\leq\! m,1\!\leq\! j\!\leq\!k$ and $0\!\leq\! \ell\!\leq\! k$, we let ${\cal A}^*_{i,j,\ell}$ denote the family of sets containing $j$ elements, constructed by adding elements from $S_i$ to sets in $(\bigcup_{1\leq j'\leq j}\mathrm{M}[i\!-\!1,\!j',\!\ell\!-\!1])\!\cup\!\{\emptyset\}$, i.e., ${\cal A}^*_{i,j,\ell} = \{S\!\cup\! S'_i: S\!\in\! (\bigcup_{1\leq j'\leq j}\mathrm{M}[i\!-\!1,\!j',\!\ell\!-\!1])\!\cup\!\{\emptyset\}, S'_i\!\subseteq \!S_i, |S\!\cup\!S'_i|\!=\!j\}$.

\vspace{-0.2em}

\begin{mylemma}\label{lemma:coveradd}
Consider an iteration of Step \ref{pcalg:ite}	in \alg{PCAlg}, corresponding to some values $i,j$ and $\ell$. Then, the family ${\cal A}_{r,j}$ represents the family ${\cal A}^*_{i,j,\ell}$.
\end{mylemma}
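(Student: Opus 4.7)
The plan is to prove the lemma by induction on $r' \in \{0, 1, \ldots, r\}$, establishing for every $j' \in \{0, 1, \ldots, j\}$ that ${\cal A}_{r', j'}$ represents an auxiliary family ${\cal B}_{r', j'}$ of $j'$-sets, and then verify that ${\cal B}_{r, j} = {\cal A}^*_{i, j, \ell}$. Concretely, I would define
\[ {\cal B}_{r', j'} = \{S \cup S'_i : S \in \mathrm{M}'_{j''} \mbox{ for some } 0 \leq j'' \leq j', \, S'_i \subseteq \{s_1, \ldots, s_{r'}\} \setminus S, \, |S \cup S'_i| = j'\}, \]
where $\mathrm{M}'_0 = \{\emptyset\}$ and $\mathrm{M}'_{j''} = \mathrm{M}[i\!-\!1, j'', \ell\!-\!1]$ for $j'' \geq 1$. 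Imposing $S'_i \cap S = \emptyset$ in the definition removes ambiguity about the ``source'' of each element (an element of $S_i$ might independently appear in an $\mathrm{M}$-entry set, since $S_i$ can overlap with earlier $S_a$'s), and makes the recursive decomposition below clean.

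The base case $r' = 0$ is immediate: ${\cal B}_{0, 0} = \{\emptyset\} = {\cal A}_{0, 0}$, and forcing $S'_i = \emptyset$ for $j' \geq 1$ yields ${\cal B}_{0, j'} = \mathrm{M}[i\!-\!1, j', \ell\!-\!1] = {\cal A}_{0, j'}$; any family represents itself. For the inductive step, I would first prove the set-theoretic decomposition
\[ {\cal B}_{r', j'} = {\cal B}_{r'-1, j'} \cup \{X \cup \{s_{r'}\} : X \in {\cal B}_{r'-1, j'-1}, \, s_{r'} \notin X\}, \]
by a short case analysis on whether a set $Z \in {\cal B}_{r',j'}$ contains $s_{r'}$, and (if so) on whether $s_{r'}$ arose from the ``$S$ side'' or the ``$S'_i$ side'' of the representation; the disjointness $S'_i \cap S = \emptyset$ ensures each subcase collapses into one of the two families on the right.

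Given the decomposition, I would show that the pre-\alg{RepAlg} family
\[ {\cal A}_{r'-1, j'} \cup \{S \cup \{s_{r'}\} : S \in {\cal A}_{r'-1, j'-1}, \, s_{r'} \notin S\} \]
represents ${\cal B}_{r', j'}$. For $Z \in {\cal B}_{r', j'}$ and $Y \subseteq U \setminus Z$ with $|Z \cup Y| \leq k$, the case $Z \in {\cal B}_{r'-1, j'}$ is handled by applying the induction hypothesis directly. In the remaining case $Z = X \cup \{s_{r'}\}$ with $X \in {\cal B}_{r'-1, j'-1}$, set $Y' = Y \cup \{s_{r'}\}$; since $|X| = j'-1$ we still have $|X \cup Y'| \leq k$, so the induction hypothesis yields some $\widehat{X} \in {\cal A}_{r'-1, j'-1}$ disjoint from $Y'$, and then $\widehat{X} \cup \{s_{r'}\}$ lies in the constructed family and is disjoint from $Y$. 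Because \alg{RepAlg} outputs a representative of its input (Theorem~\ref{theorem:newrep}), Observation~\ref{obs:transitive} transfers the representation property to ${\cal A}_{r', j'}$.

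Finally, I would verify ${\cal B}_{r, j} = {\cal A}^*_{i, j, \ell}$: the containment $\subseteq$ is direct from the definitions, and for $\supseteq$, any $S \cup S'_i \in {\cal A}^*_{i, j, \ell}$ equals $S \cup (S'_i \setminus S)$, which meets the disjointness clause without changing the underlying set. The main obstacle I anticipate is precisely the disjointness bookkeeping in the decomposition of ${\cal B}_{r', j'}$, caused by elements of $S_i$ that may already appear inside sets from $\mathrm{M}[i\!-\!1, \cdot, \ell\!-\!1]$; folding $S'_i \cap S = \emptyset$ into the definition of ${\cal B}$ is what isolates this difficulty to a short case analysis rather than letting it propagate through the induction.
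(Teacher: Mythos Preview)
Your proposal is correct and follows essentially the same route as the paper. The paper proves a slightly more general statement (for all $r'\le r$ and $j'\le j$) by the same induction on $r'$, using the same case split on whether $s_{r'}$ belongs to the set in question and the same $Y' = Y\cup\{s_{r'}\}$ trick; your explicit disjointness clause in the definition of ${\cal B}_{r',j'}$ and the explicit decomposition ${\cal B}_{r',j'} = {\cal B}_{r'-1,j'} \cup \{X\cup\{s_{r'}\}:X\in{\cal B}_{r'-1,j'-1},\,s_{r'}\notin X\}$ are cosmetic conveniences that the paper handles implicitly.
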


\vspace{-0.1em}

{\noindent We use Lemma \ref{lemma:coveradd} in proving the next lemma, which shows the correctness~of~\alg{PCAlg} (see Appendix~\ref{app:coveralg}).}

\vspace{-0.25em}

\begin{mylemma}\label{lemma:coveralg}
For all $0\!\leq\! i\!\leq\! m$,$1\!\leq\! j\!\leq\!k$ and $0\!\leq\! \ell\!\leq\! k$, M$[i,j,\ell]$ represents $Sol_{i,j,\ell}$.
\end{mylemma}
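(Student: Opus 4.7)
The plan is to prove Lemma~\ref{lemma:coveralg} by induction on $i$, decomposing every set $X \in Sol_{i,j,\ell}$ according to whether $S_i$ participates in its witnessing subfamily, and then invoking Lemma~\ref{lemma:coveradd} and the transitivity of representation (Observation~\ref{obs:transitive}) to lift a representative through the two RepAlg calls in Steps~\ref{pcalg:new3} and~\ref{pcalg:4}.

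For the base case $i = 0$ (and more generally whenever $\ell > i$ or $\ell = 0 < j$), the family $Sol_{i,j,\ell}$ is empty, so the initialization $\mathrm{M}[i,j,\ell] = \emptyset$ trivially represents it. For the inductive step, fix $j, \ell$ and take any $X \in Sol_{i,j,\ell}$ together with $Y \subseteq E \setminus X$ satisfying $|X \cup Y| \leq k$. I would split into two cases. In Case A, a witnessing subfamily ${\cal S}' \subseteq \{S_1,\ldots,S_{i-1}\}$ already exists, so $X \in Sol_{i-1,j,\ell}$; by the induction hypothesis applied to $\mathrm{M}[i-1,j,\ell]$, some $\widehat{X}_A \in \mathrm{M}[i-1,j,\ell]$ is disjoint from $Y$ with $\widehat{X}_A \cup Y \in {\cal I}$. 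In Case B, every witness ${\cal S}'$ must contain $S_i$, so I decompose $X = X' \cup X''$ with $X' = X \setminus S_i$ and $X'' = X \cap S_i$, and set $j' = |X'|$; then $X' \in Sol_{i-1, j', \ell-1}$ and $X'' \subseteq S_i$. Using the extension set $Y \cup X'' \subseteq E \setminus X'$, which still satisfies $X' \cup (Y \cup X'') = X \cup Y \in {\cal I}$, the induction hypothesis (or the convention $\mathrm{M}[i-1,0,\ell-1] = \{\emptyset\}$ when $j' = 0$) gives some $\widehat{X}'_B \in \mathrm{M}[i-1,j',\ell-1] \cup \{\emptyset\}$ disjoint from $Y \cup X''$ with $\widehat{X}'_B \cup X'' \cup Y \in {\cal I}$. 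Then $\widehat{X}_B := \widehat{X}'_B \cup X''$ has size $j$ and lies in ${\cal A}^*_{i,j,\ell}$, while remaining disjoint from $Y$.

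To finish, I chain the representations. In Case B, Lemma~\ref{lemma:coveradd} tells us ${\cal A}_{r,j}$ represents ${\cal A}^*_{i,j,\ell}$, so $\widehat{X}_B$ yields some $\widetilde{X}_B \in {\cal A}_{r,j}$ disjoint from $Y$ with $\widetilde{X}_B \cup Y \in {\cal I}$. In either case we now have a set in $\mathrm{M}[i-1,j,\ell] \cup {\cal A}_{r,j}$ that is disjoint from $Y$ and extends to an independent set. By Theorem~\ref{theorem:newrep}, the RepAlg call in Step~\ref{pcalg:4} returns a subfamily $\mathrm{M}[i,j,\ell]$ that represents $\mathrm{M}[i-1,j,\ell] \cup {\cal A}_{r,j}$; applying Observation~\ref{obs:transitive} delivers the required $\widehat{X} \in \mathrm{M}[i,j,\ell]$ disjoint from $Y$.

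The main delicate point is handling Case B cleanly. One has to verify that shifting $X''$ from $X$ into the extension set $Y$ preserves both the disjointness requirement ($Y \cup X'' \subseteq E \setminus X'$ because $X'' \cap X' = \emptyset$) and the matroid constraint ($|X' \cup (Y \cup X'')| = |X \cup Y| \leq k$), and to reconcile the boundary $j' = 0$ with the $\{\emptyset\}$ seeded into ${\cal A}_{0,0}$ in Step~\ref{pcalg:new1}. The rest is bookkeeping: Lemma~\ref{lemma:coveradd} does the heavy lifting for the ${\cal A}^*_{i,j,\ell}$ side, the induction hypothesis handles the $Sol_{i-1,j,\ell}$ side, and transitivity of representation glues them together through the final RepAlg call.
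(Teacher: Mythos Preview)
Your proof is correct and follows essentially the same approach as the paper: induct on $i$, split on whether $S_i$ participates in the witnessing subfamily, use the induction hypothesis together with Lemma~\ref{lemma:coveradd} to land in $\mathrm{M}[i-1,j,\ell]\cup{\cal A}_{r,j}$, and then apply transitivity (Observation~\ref{obs:transitive}) through the final \alg{RepAlg} call. The only cosmetic difference is that the paper packages your Case~B via an auxiliary family $Sol^*_{i,j,\ell}$ (solutions whose witness must include $S_i$) and first shows ${\cal A}^*_{i,j,\ell}$ represents $Sol^*_{i,j,\ell}$ before invoking Lemma~\ref{lemma:coveradd}; the underlying element-level argument is the same as yours, including shifting the $S_i$-part of $X$ into the extension set $Y$.
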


\vspace{-0.15em}

{\noindent We summarize in the next result.}

\vspace{-0.15em}

\begin{theorem}\label{theorem:cover}
\alg{PCAlg} solves {\sc $k$-PC} in time $O(2.619^k|{\cal S}|\log^2|U|)$.
\end{theorem}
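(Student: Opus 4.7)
For correctness, I invoke Lemma~\ref{lemma:coveralg}, which gives that after the main triple loop completes, for every $\ell$ the entry M$[m,k,\ell]$ represents $Sol_{m,k,\ell}$ with respect to the uniform matroid $U_{|U|,k}$. Since every $X \in Sol_{m,k,\ell}$ has size exactly $k$, picking $Y=\emptyset$ in Definition~\ref{def:repfam} makes $X \cup Y \in {\cal I}$ automatic, so M$[m,k,\ell] \neq \emptyset$ iff $Sol_{m,k,\ell} \neq \emptyset$. The latter holds iff there exist $\ell$ sets in $\cal S$ whose union contains at least $k$ elements: any $k$-subset of such a union lies in $Sol_{m,k,\ell}$, and any $X \in Sol_{m,k,\ell}$ comes, by the definition of $Sol$, from a subfamily ${\cal S}' \subseteq \{S_1,\dots,S_m\}$ of size $\ell$. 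Hence Step~\ref{pcalg:5} returns the optimum, with Step~\ref{pcalg:1} covering the edge case where a single set already covers $k$ elements.

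\textbf{Running time.} After Step~\ref{pcalg:1} every $|S_i| \leq k-1$, so the value $r$ in Step~\ref{pcalg:new0} is at most $k$. The outer triple loop runs $O(mk^2)$ times and the inner double loop $O(k^2)$ times, giving $O(mk^4) = O(|{\cal S}| \cdot 2^{o(k)})$ calls to \alg{RepAlg}. A straightforward induction that applies Theorem~\ref{theorem:newrep} to each prior call shows that every intermediate family handed to \alg{RepAlg} whose members have size $j'$ has cardinality at most
\[
R_{j'} \; := \; \frac{(ck)^k}{j'^{\,j'}(ck-j')^{k-j'}} \cdot 2^{o(k)} \log|U|;
\]
this bound carries over to M$[\cdot,j',\cdot]$, to ${\cal A}_{\cdot,j'}$, and to the union of shifted families constructed in Step~\ref{pcalg:new3}. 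Theorem~\ref{theorem:newrep} then bounds the time of the call at index $j'$ by
\[
O\!\left( R_{j'} \left(\frac{ck}{ck-j'}\right)^{k-j'} 2^{o(k)} \log|U| \right) \; = \; O\!\left( \frac{(ck)^{2k-j'}}{j'^{\,j'}(ck-j')^{2(k-j')}} \cdot 2^{o(k)} \log^2|U| \right).
\]

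\textbf{Optimization.} The remaining step, which I expect to be the main technical hurdle, is to choose $c \geq 1$ so as to minimize the maximum of this expression over $j' \in \{0,\dots,k\}$. Writing $\alpha = j'/k$ collapses the $k$-dependence outside the base and reduces the task to verifying
\[
\min_{c \geq 1}\ \max_{\alpha \in [0,1]}\ \frac{c^{\,2-\alpha}}{\alpha^{\alpha}(c-\alpha)^{2(1-\alpha)}}\ \leq\ 2.619.
\]
For each fixed $c$ the inner maximum sits at the stationary point of the log-derivative in $\alpha$, and a one-dimensional scan in $c$ locates a value close to $3/2$ at which the worst-case $\alpha$ lies near $0.555$ and the base drops just below $2.619$. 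Combining this with the $O(|{\cal S}| \cdot 2^{o(k)})$ count of calls to \alg{RepAlg} and absorbing the remaining $k^{O(1)}$ factors into $2^{o(k)}$ yields the claimed $O(2.619^k|{\cal S}|\log^2|U|)$ running time. The delicate point is that the two quantities being traded through $c$, namely the family size $R_{j'}$ and the cost-per-set $(ck/(ck-j'))^{k-j'}$, move in opposite directions as $j'$ varies, so a single value of $c$ must control the product uniformly across the entire range of $\alpha$.
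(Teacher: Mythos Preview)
Your proof is correct and follows the paper's approach: correctness via Lemma~\ref{lemma:coveralg} and the nonemptiness observation at $j'=k$, running time via the identical product bound $\frac{(ck)^{2k-j'}}{j'^{\,j'}(ck-j')^{2(k-j')}}$, and optimization over~$c$. The only discrepancy is numerical---the paper takes $c=1.447$ (with the maximizing $\alpha\approx 0.55277$) rather than $c\approx 3/2$, but the objective is flat enough near the optimum that your choice still yields a base below $2.619$.
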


\vspace{-1em}

\begin{proof}
Lemma \ref{lemma:coveralg} and Step \ref{pcalg:5} imply that \alg{PCAlg} solves {\sc $k$-PC}. Also, Lemmas \ref{lemma:coveradd} and \ref{lemma:coveralg}, and the way \alg{RepAlg} proceeds, imply that \alg{PCAlg} runs in time

\vspace{-0.25em}

\[O(2^{o(k)}|{\cal S}|\log^2 |U|\cdot\max_{0\leq t\leq k}\left\{\frac{(ck)^{k}}{t^t(ck-t)^{k-t}}(\frac{ck}{ck-t})^{k-t}\right\})\]

\vspace{-0.25em}

{\noindent By choosing $c\!=\!1.447$, the maximum is obtained at $t\!=\!\alpha k$, where $\alpha\!\cong\!0.55277$. Thus, \alg{PCAlg} runs in time $O(2.61804^k|{\cal S}|\log^2|U|)$.}\qed
\end{proof}


\vspace{-0.25em}

\mysection{An Algorithm for {\sc $k$-Internal Out-Branching}}\label{section:kiob}

\vspace{-0.25em}

We show below how to use our scheme, \alg{RepAlg}, to obtain a faster parameterized algorithm for {\sc $k$-IOB}. We first define an auxiliary problem called {\sc $(k,\!t)$-Tree}, which requires finding a tree on a ``small" number of nodes, rather than a spanning tree. Given a directed graph $G\!=\!(\!V,\!E\!)$, a node $r\!\in\! V$, and nonnegative integers $k$ and $t$, the {\sc $(k,\!t)$-Tree} problem asks if $G$ contains an out-tree $T$ rooted at $r$, such that $i(T)\!=\!k$ and $\ell(T)\!=\!t$. The following lemma implies that we can focus on solving {\sc $(k,\!t)$-Tree}.

\begin{mylemma}[\cite{ipec13}]\label{lemma:tokltree}
If {\sc $(k,t)$-Tree} can be solved in time $\tau(G,\!k,\!t)$, then {\sc $k$-IOB} can be solved in time $O(|V|(|E|\!+\!\sum_{1\leq t\leq k}\tau(G,\!k,\!t)))$.
\end{mylemma}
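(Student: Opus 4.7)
My plan is to reduce $k$-IOB to a collection of $(k,t)$-Tree instances indexed by all pairs $(r,t)$ with $r\in V$ and $t\in\{1,\ldots,k\}$. For each $r\in V$, I would first check in $O(|E|)$ time (via a BFS/DFS from $r$) whether $r$ reaches every node of $G$, and only then invoke the assumed $(k,t)$-Tree subroutine on $(G,r,k,t)$ for each $t$; the outer algorithm returns ``yes'' iff at least one such call does. The claimed running time is immediate from this template: $|V|$ outer iterations, each costing $O(|E|) + \sum_{t=1}^{k}\tau(G,k,t)$.

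The soundness direction is the easier half. Given a $(k,t)$-tree $T$ returned by the subroutine together with the guarantee that $r$ reaches all of $V$ in $G$, I would greedily enlarge $T$ by repeatedly adding an arc from $V_T$ into $V\setminus V_T$. The process can halt only at $V_T = V$; otherwise $r$ would fail to reach some node outside $V_T$, contradicting the reachability check. Every node that was internal in $T$ still has all of its $T$-children in the final tree, and hence remains internal, so the resulting out-branching of $G$ has at least $k=i(T)$ internal nodes.

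The completeness direction is the conceptually interesting step. Starting from an out-branching $T^*$ rooted at $r^*$ with $\geq k$ internal nodes, I would first form $C$, the subtree of $T^*$ induced by the internal nodes of $T^*$ (rooted at $r^*$, since $r^*$ is itself internal whenever $k\geq 1$), and then extract a subtree $C'\subseteq C$ containing $r^*$ with exactly $k$ nodes (e.g., the first $k$ nodes visited by a BFS on $C$ from $r^*$). Each leaf of $C'$ is still internal in $T^*$, so it has a child in $T^*$ that must lie outside $C'$; attaching one such child per leaf of $C'$ yields a tree $\widetilde{C}\subseteq T^*$ in which all $k$ original nodes of $C'$ are now internal and the added nodes are leaves. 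A $k$-node tree has at most $\max\{k-1,1\}\leq k$ leaves, so $\ell(\widetilde{C})=\ell(C')$ is some $t\in\{1,\ldots,k\}$, and $\widetilde{C}$ is precisely the $(k,t)$-tree rooted at $r^*$ that the subroutine must detect.

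The main obstacle I foresee is controlling the leaf count $t$ so that it lies in $\{1,\ldots,k\}$: a naive ``prune leaves of $T^*$ until $k$ internal nodes remain'' approach gives no a priori bound on $t$, which is why the two-stage construction — first shrink to exactly $k$ internal nodes, then attach exactly one fresh child per current leaf — is essential. Verifying that the attached children are distinct and indeed outside $C'$ is routine, since $T^*$ is a tree, so each added child has a unique parent, and a leaf of $C'$ has no $C'$-children by definition.
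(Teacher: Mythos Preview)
The paper does not give its own proof of this lemma; it is simply quoted from \cite{ipec13}. So there is nothing in the paper to compare your argument against.

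That said, your reduction is correct and is essentially the standard one. Both directions go through as you describe: for soundness, reachability of all of $V$ from $r$ guarantees that the greedy extension of the $(k,t)$-tree terminates at a spanning out-tree, and out-degrees can only increase; for completeness, the two-stage construction (first carve out a rooted subtree $C'$ on exactly $k$ internal nodes of $T^*$, then hang one fresh $T^*$-child under each leaf of $C'$) yields an out-tree with $i=k$ and $\ell=\ell(C')\in\{1,\dots,k\}$. The checks that the attached children are pairwise distinct and lie outside $V_{C'}$ follow, as you note, from $T^*$ being a tree and from $C'$ being a rooted subtree of $T^*$. The running-time bound is immediate from the loop structure.
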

Next, we show how to solve {\sc $(\!k,\!t\!)$-Tree}. Our solution technique is based~on~iterati-\\ng over all pairs of nodes $v,u\!\in\! V$, and all values $0\!\leq\! i\!\leq \!k\!-\!1$ and $0\!\leq\! \ell\!\leq\! t$. When we reach such $v,u,i$ and $\ell$, we have already computed, for all $v',u'\!\in\! V$, $0\!\leq i'\!\leq i$, and $0\!\leq\! \ell'\!\leq\! \ell$ satisfying $i'\!+\!\ell'\!<\!i\!+\ell\!$, representative families for families of sets that are ``partial solutions". Each such set contains nodes of an out-tree of $G$ that is rooted at $v'$, includes $u'$ as a leaf (unless $v'\!=\!u'$) and consists of $i'$ internal nodes (excluding $v'$) and $\ell'$ leaves (excluding $u'$). We then try to ``connect" out-trees represented by partial solutions in a manner that results in a {\em legal} out-tree---i.e., an out-tree of $G$ that is rooted at $v$, includes $u$ as a leaf (unless $v\!=\!u$) and consists of $i$ internal nodes (excluding $v$) and $\ell$ leaves (excluding $u$). In constructing a set of such legal out-trees, we add families of ``small" partial solutions one-by-one, so we can compute new representative families after adding each family, and then use them in adding the next one---this is a crucial point in obtaining our improved running time. The construction itself is quite technical. On a high level, it consists of iterating over some trees that indicate which families of partial solutions should be currently used, and in which order they should be added.

\myparagraph{Some Definitions} Let $d\!\geq\! 2$ be a constant $(\!\mathrm{to}$ be $\mathrm{determined}\!)$. Given nodes~$v,\!u\!\in\! V$, $0\!\leq\! i\!\leq\! k\!-\!1$ and $0\!\leq\! \ell\!\leq\! t$, let ${\cal T}_{v,u,i,\ell}$ be the set of out-trees of $G$ rooted at $v$, having exactly $i$ internal nodes and $\ell$ leaves, excluding $v$ and $u$, where $v\!=\!u$ or $u$ is a leaf. Also, let $Sol_{v,u,i,\ell} \!=\!\{V_T\!\setminus\!\{v,\!u\}\!:T\!\in\!{\cal T}_{v,u,i,\ell}\}$. Given nodes $v,\!u\!\in\! V$, let ${\cal C}_{v,u}$ be the set of trees $C$ rooted at $v$, where $v\!=\!u$ or $u$ is a leaf, $V_C \!\subseteq\! V$, and $3 \!\leq\! |V_C| \!\leq\! 4d$. Given a node $v$ of a rooted tree $T$, let $f_T\!(\!v\!)$ be the father of $v$ in $T$.

\myparagraph{The Algorithm} We now describe \alg{TreeAlg}, our algorithm for {\sc $(k,t)$-Tree} (see the pseudocode below). \alg{TreeAlg} first generates a matrix M, where each entry M$[v,\!u,\!i,\!\ell]$ holds a family that represents $Sol_{v,u,i,\ell}$. \alg{TreeAlg} iterates over all $v,u\in V$, $i\!\in\!\{0,\!...,k\!-\!1\}$, and $\ell\!\in\!\{0,\!...,t\}$ such that $1\!\leq\! i\!+\!\ell$. Next, consider some iteration, corresponding to such $v,\!u,\!i$ and $\ell$.

The goal in each iteration is to compute M$[v,\!u,\!i,\!\ell]$, by using entries that are already computed. Algorithm \alg{TreeAlg} generates a matrix N, where each entry N$[C]$ holds a family that represents the subfamily of $Sol_{v,u,i,\ell}$ including the node set (excluding $v$ and $u$) of each out-tree $T\!\in\! {\cal T}_{v,u,i,\ell}$ {\em complying} with the rooted tree $C$ as follows (see Fig.~\ref{fig:3}): (1) for any two nodes $v',\!u'\!\in\! V_C$, $v'$ is an ancestor of $u'$ in $C$ iff $v'$ is an ancestor of $u'$ in $T$, (2) the leaves in $C$ are leaves in $T$, and (3) in the forest obtained by removing $V_C$ from $T$, each tree has at most $(k\!+\!t)/d$ nodes and at most two neighbors in $T$ from $V_C$. Roughly speaking, each entry N$[C]$ is easier to compute than the entry M$[v,\!u,\!i,\!\ell]$, since $C$ ``guides" us through the computation as follows. The rooted tree $C$ implies which entries in M are relevant to N$[C]$, in which order they should be used, and, in particular, it ensures that these are only entries of the form M$[v',\!u',\!i',\!\ell']$, where $i'\!+\!\ell'\!\leq\!(k\!+\!t)/d$. This bound on $i'\!+\!\ell'$ ensures that the families for which we compute representative families are ``small", thereby reducing the running time of calls to \alg{RepAlg}. Next, consider an iteration corresponding to some $C\!\in\! {\cal C}_{v,u}$.

The current goal is to compute N$[C]$ by using the guidance of $C$, as we now describe in detail. Algorithm \alg{TreeAlg} generates a matrix L, where each entry L$[j,i',\ell']$ holds a family that represents the family of node sets, excluding nodes in $V_C$, of forests in ${\cal F}_{v,u,C,j,i',\ell'}$, which is defined as follows.
The set ${\cal F}_{v,u,C,j,i',\ell'}$ includes each subforest $F'$ of $G$ complying with the subforest $F$ of $C$ induced by $\{w_1,\!...,w_j\}$ in a manner similar to the above compliance of an out-tree $T\!\in\! {\cal T}_{v,u,i,\ell}$ with $C$, such that: (1) $V_{F'}\!\cap\!(V_C\setminus V_F)\!=\!\emptyset$, and (2) the number of internal nodes (leaves) in $F'$, excluding nodes in $V_F$, is $i'$ ($\ell'$).
Informally, we consider such a subforest $F'$ as a stage towards computing an out-tree $T\!\in\! {\cal T}_{v,u,i,\ell}$ that complies with $C$. Indeed, note that ${\cal F}_{v,u,C,|V_C|,i^*,\ell^*}$ is the set of out-trees in ${\cal T}_{v,u,i,\ell}$ that comply with $C$. Roughly speaking, the matrix L is computed by using dynamic programming and algorithm \alg{RepAlg} (in Steps \ref{step:tree9}--\ref{step:tree13}) as follows. Each entry in L is computed by adding node sets of certain trees to node sets of forests computed at a previous stage, and then calling algorithm \alg{RepAlg} to compute a representative family for the result.

\begin{figure}
\medskip
\centering
\frame{
\includegraphics[scale=0.35]{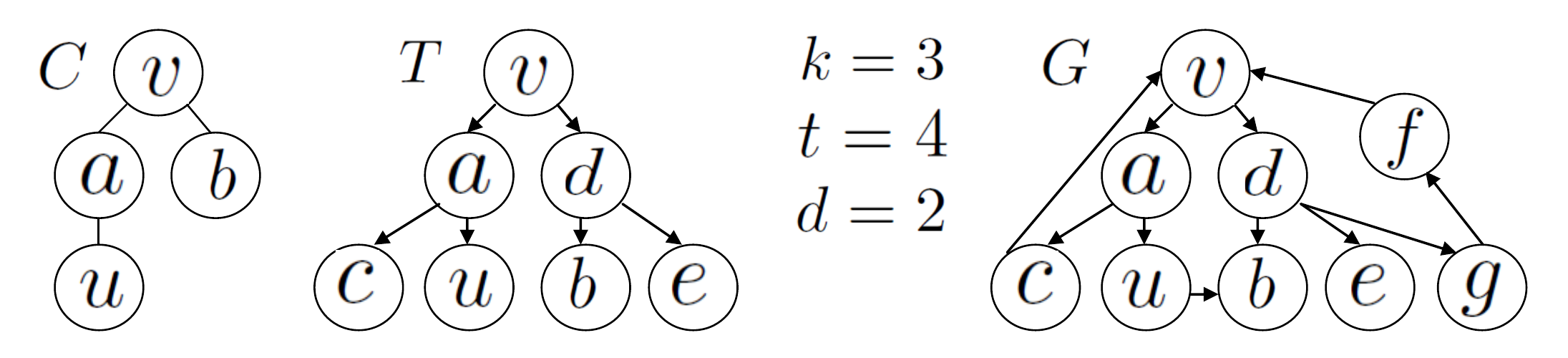}
}
\caption{An example for an out-tree $T\!\in\!{\cal T}_{v,\!u,\!2,\!3}$, complying with the rooted tree $C\!\in\!{\cal C}_{v,\!u}$.}
\label{fig:3}
\end{figure}

\renewcommand{\thealgorithm}{2}
\floatname{algorithm}{Algorithm}
\begin{algorithm}[!ht]
\caption{\alg{TreeAlg}($G\!=\!(V,E),r,k,t$)}
\begin{algorithmic}[1]\label{alg:tree}
\STATE\label{step:tree1} let M be a matrix that has an entry $[v,u,i,\ell]$ for all $v,u\in V$, $0\leq i\leq k-1$ and $0\leq \ell\leq t$, which is initialized to $\emptyset$.
\STATE\label{step:tree2} M$[v,u,0,0]\Leftarrow\{\emptyset\}$ for all $v,u\in V$ s.t. $(v,u)\in E$ or $v=u$.

\FOR{all $v,u\in V$, $i=0,\ldots,k-1$, $\ell=0,\ldots,t$ s.t. $1\leq i+\ell$}\label{step:tree3}
	\STATE\label{step:tree5} let N be a matrix that has an entry $[C]$ for all $C\in{\cal C}_{v,u}$.
	
	\FOR{all $C\in {\cal C}_{v,u}$}\label{step:tree6}
		\STATE\label{step:tree7} let $V_C\!=\!\{w_1,\!...,w_{|V_C|}\}$ where $w_1\!=\!v$, $i^*\!=\!i\!+\!1\!-\!i(C)$, and $\ell^*\!=\!\ell\!+\!|\{u\}\setminus\{v\}|\!-\!\ell(C)$.
		\STATE\label{step:tree8} let L be a matrix that has an entry $[j,i',\ell']$ for all $1\leq j\leq |V_C|$, $0\leq i'\leq i^*$ and $0\leq \ell'\leq \ell^*$, which is initialized to $\emptyset$.
		\STATE\label{step:tree9} L$[1,\!i',\!\ell']\!\Leftarrow\!\{U\!\in\!\mathrm{M}[v,\!v,\!i',\!\ell']\!: U\!\cap\! V_C\!=\!\emptyset\}$ for all $\displaystyle{0\!\leq\! i'\!\leq\! i^*}$ and $\displaystyle{0\!\leq\! \ell'\!\leq\! \ell^*}$ s.t. $i'\!+\!\ell'\!\leq\! (k\!+\!t)/d$.
	
		\FOR{$j=2,\ldots,|V_C|$, $i'=0,\ldots,i^*$, $\ell'=0,\ldots,\ell^*$}\label{step:tree10}
			\STATE\label{step:tree11} let ${\cal A}$ be the family of all sets $U\!\cup\! W$ such that $U\!\cap\! (W\!\cup\! V_C)\!=\!\emptyset$, and there are $\displaystyle{0\!\leq\! i''\!\leq\! i'}$ and $\displaystyle{0\!\leq\! \ell''\!\leq\! \ell'}$ satisfying $i''\!+\!\ell''\!\leq\! \displaystyle{\frac{k\!+\!t}{d}}$ for which\\(1) $U\!\in\! \mathrm{M}[f_C(w_j),\!w_j,\!i'',\!\ell'']\}$ and $W\!\in\! \mathrm{L}[j\!-\!1,\!i'\!-\!i'',\!\ell'\!-\!\ell'']$; or\\
			(2) $w_j$ is not a leaf in $C$, $\ell''\!\geq\! 1$,$U\!\in\! \mathrm{M}[w_j,\!w_j,\!i'',\!\ell'']\}$ and $W\!\in\! \mathrm{L}[j,\!i'\!-\!i'',\!\ell'\!-\!\ell'']$.
			
			\STATE\label{step:tree12} L$[j,i',\ell']\Leftarrow$ \alg{RepAlg}$(V,k+t,{\cal A})$.	
		\ENDFOR\label{step:tree13}
			
		\STATE\label{step:tree14} N$[C]\Leftarrow \{U\cup(V_C\setminus\{v,u\}): U\in\mathrm{L}[|V_C|,i^*,\ell^*]\}$.
	\ENDFOR

  \STATE\label{step:tree16} M$[v,u,i,\ell]\Leftarrow$ \alg{RepAlg}$(V,k+t,\bigcup_{C\in{\cal C}_{v,u}}\mathrm{N}[C])$.
\ENDFOR

\STATE\label{step:tree18} accept iff $\mathrm{M}[r,r,k-1,t]\neq\emptyset$.
\end{algorithmic}
\end{algorithm}

\myparagraph{Correctness and Running Time} The following lemma implies the correctness of \alg{TreeAlg} (the proof is given in Appendix~\ref{app:lemmakltreecor}).

\begin{mylemma}\label{lemma:kltreecor}
For all $v,u\!\in\! V$, $0\!\leq\! i\!<\! k$ and $0\!\leq\! \ell\!\leq\! t$, M$[v,\!u,\!i,\!\ell]$ represents $Sol_{v,u,i,\ell}$.
\end{mylemma}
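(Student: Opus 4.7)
\medskip

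\noindent\textbf{Proof plan (sketch).} The plan is to proceed by induction on $s := i+\ell$. For the base case $s=0$, Step~\ref{step:tree2} sets $\mathrm{M}[v,u,0,0]=\{\emptyset\}$ exactly when $v=u$ or $(v,u)\in E$, and in either of these cases ${\cal T}_{v,u,0,0}$ consists of a single out-tree (the trivial tree on $\{v\}$, resp.\ the edge $(v,u)$) whose node set, minus $\{v,u\}$, is empty, so $Sol_{v,u,0,0}=\{\emptyset\}$; when $v\neq u$ and $(v,u)\notin E$, both sides are empty. Hence $\mathrm{M}[v,u,0,0]$ trivially represents $Sol_{v,u,0,0}$.

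For the inductive step, fix $v,u,i,\ell$ with $s\geq 1$ and assume the claim for every $(v',u',i',\ell')$ with $i'+\ell'<s$. Pick an arbitrary $X\in Sol_{v,u,i,\ell}$, witnessed by some $T\in{\cal T}_{v,u,i,\ell}$, and a ``blocker'' $Y\subseteq V\setminus X$ with $|X\cup Y|\leq k+t$; I must find $\widehat X\in\mathrm{M}[v,u,i,\ell]$ with $\widehat X\cap Y=\emptyset$. The central structural claim I will need is that there exists some $C\in{\cal C}_{v,u}$ with which $T$ complies in the sense of the three conditions stated before the algorithm. I will prove this by a centroid-style decomposition of $T$: starting from $v$ (and including $u$ if distinct), repeatedly add to $C$ a node whose removal from the current residual pieces breaks them into subpieces of size at most $(k+t)/d$, stopping as soon as every residual piece is small enough and touches $V_C$ in at most two nodes. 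A careful accounting of how many nodes are needed will show $3\leq |V_C|\leq 4d$, giving the required $C\in{\cal C}_{v,u}$.

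Granted such $C$, I prove a sub-lemma by a secondary induction on $j$ (and on $i'+\ell'$): $\mathrm{L}[j,i',\ell']$ represents the family ${\cal L}_{v,u,C,j,i',\ell'}:=\{V_{F'}\setminus V_C:F'\in{\cal F}_{v,u,C,j,i',\ell'}\}$. For $j=1$, Step~\ref{step:tree9} keeps exactly those members of $\mathrm{M}[v,v,i',\ell']$ disjoint from $V_C$, and the outer inductive hypothesis (applied with $i'+\ell'\leq(k+t)/d<s$ whenever $i+\ell\geq 1$ and $d$ is large enough) yields representation. For $j\geq 2$, every $F'\in{\cal F}_{v,u,C,j,i',\ell'}$ decomposes into a $\mathrm{L}[j-1,\cdot,\cdot]$-forest together with either (1) a sub-out-tree hanging off $f_C(w_j)$ through $w_j$, contributed by $\mathrm{M}[f_C(w_j),w_j,i'',\ell'']$, or (2) a sub-out-tree attached at an internal $w_j$ that contributes leaves, contributed by $\mathrm{M}[w_j,w_j,i'',\ell'']$; both cases are precisely what Step~\ref{step:tree11} enumerates. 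Combining the two inductive hypotheses with Observation~\ref{obs:transitive} and the fact that $\alg{RepAlg}$ outputs a representative subfamily (Theorem~\ref{theorem:newrep} with matroid $U_{|V|,k+t}$), the family assigned in Step~\ref{step:tree12} represents ${\cal L}_{v,u,C,j,i',\ell'}$. Step~\ref{step:tree14} then lifts $\mathrm{L}[|V_C|,i^*,\ell^*]$ to $\mathrm{N}[C]$ by adjoining $V_C\setminus\{v,u\}$, and a straightforward verification shows $\mathrm{N}[C]$ represents the subfamily of $Sol_{v,u,i,\ell}$ arising from out-trees complying with $C$.

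Finally, the structural claim yields $Sol_{v,u,i,\ell}=\bigcup_{C\in{\cal C}_{v,u}}Sol^C_{v,u,i,\ell}$, where each piece is represented by $\mathrm{N}[C]$, so $\bigcup_C\mathrm{N}[C]$ represents $Sol_{v,u,i,\ell}$; Step~\ref{step:tree16} applies $\alg{RepAlg}$ and Observation~\ref{obs:transitive} completes the induction. The main obstacle will be the structural decomposition: choosing the constant $d$ (and $4d$) so that the centroid-style procedure produces a $C\in{\cal C}_{v,u}$ satisfying \emph{all three} compliance conditions simultaneously, in particular the ``at most two $V_C$-neighbors per residual piece'' requirement, which forces the extraction to add whole paths rather than single nodes at certain steps. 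Once this combinatorial lemma is established, the rest of the argument is a routine compositional verification driven by Observation~\ref{obs:transitive} and Theorem~\ref{theorem:newrep}.
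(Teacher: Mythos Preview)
Your plan is essentially the paper's own proof: induction on $i+\ell$, a structural ``compliance'' lemma producing some $C\in{\cal C}_{v,u}$ for every $T\in{\cal T}_{v,u,i,\ell}$, an inner induction showing the $\mathrm{L}$-table represents the forest families, lifting to $\mathrm{N}[C]$, and collecting via Observation~\ref{obs:transitive}. The paper does not prove the structural decomposition itself; it quotes it as Lemma~\ref{lemma:treediv}, attributed to~\cite{representative}. Your centroid-style sketch is the right idea for that lemma, though you correctly flag the ``at most two $V_C$-neighbours'' clause as the delicate point.

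One small slip to fix: in the $j=1$ case you justify invoking the outer hypothesis by ``$i'+\ell'\le (k+t)/d < s$ whenever $d$ is large enough''. That inequality is false when $s=i+\ell$ is small. The correct reason the outer hypothesis applies is that $i'\le i^*$ and $\ell'\le \ell^*$ force $i'+\ell'\le i^*+\ell^* = (i+\ell)-|V_C\setminus\{v,u\}| < i+\ell$, since $|V_C|\ge 3$. Relatedly, note that case~(2) of Step~\ref{step:tree11} references $\mathrm{L}[j,\cdot,\cdot]$ at the \emph{same} $j$, so your inner induction must be lexicographic on $(j,\,i'+\ell')$, not on $j$ alone; the paper makes this explicit, and your parenthetical ``(and on $i'+\ell'$)'' should be strengthened accordingly. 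Finally, the paper phrases the inner claim for $\mathrm{L}$ after adjoining $V_C\setminus\{v,u\}$ rather than before; your unlifted formulation is equivalent because every set in $\mathrm{L}$ is disjoint from $V_C$ and the size budget in $U_{|V|,k+t}$ accommodates $V_C\setminus\{v,u\}$ inside the blocker, but you should say so explicitly.
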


{\noindent We summarize in the next result.}
\begin{mylemma}\label{lemma:kltree} \alg{TreeAlg} solves {\sc $(k,t)$-Tree} in time $O(2.61804^{k+t}|V|^{O(1)})$.
\end{mylemma}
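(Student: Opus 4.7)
The plan is to invoke Lemma~\ref{lemma:kltreecor} for correctness---taking $v=u=r$, $i=k-1$, $\ell=t$ shows that $\mathrm{M}[r,r,k-1,t]$ represents $Sol_{r,r,k-1,t}$, so Step~\ref{step:tree18} accepts iff $G$ admits an out-tree rooted at $r$ with $k$ internal nodes and $t$ leaves---and then focus the proof on the running time, whose analysis mirrors that of Theorem~\ref{theorem:cover}. All calls to \alg{RepAlg} are against the uniform matroid $U_{|V|,k+t}$, since every set in every family represents node subsets of subforests of a potential out-tree on $k+t$ non-root nodes; hence the tradeoff parameter is $k^\star=k+t$.

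First, I would tally the iteration counts. The outer loop at Step~\ref{step:tree3} contributes $O(|V|^2(k+1)(t+1))$ iterations; the middle loop at Step~\ref{step:tree6} ranges over ${\cal C}_{v,u}$ whose size is $|V|^{O(d)}$ (a rooted labeled tree on at most $4d$ vertices of $V$ admits $|V|^{O(d)}$ choices, with $d$ a fixed constant); and the inner loop at Step~\ref{step:tree10} contributes $O(d(k+t)^2)$ iterations. Combined with the single call at Step~\ref{step:tree16}, the total number of \alg{RepAlg} invocations is $|V|^{O(1)}$.

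Next, I would bound the cost of a single \alg{RepAlg} call. Each input family---whether the ${\cal A}$ assembled in Step~\ref{step:tree11} or the union at Step~\ref{step:tree16}---consists of $p$-subsets of $V$ for some $p\leq k+t$, obtained by combining previously stored families that are, inductively via Lemma~\ref{lemma:kltreecor}, already representative and hence of bounded size. Crucially, the M-entries used in Step~\ref{step:tree11} are restricted to $i''+\ell''\leq (k+t)/d$, which keeps $|\mathrm{M}[\cdot]|$ small enough that its Cartesian-style union with an L-entry stays within the target size dictated by Theorem~\ref{theorem:newrep}. Applying that theorem with parameter $c$, each call costs
$$O\!\left(\frac{(c(k+t))^{k+t}}{p^p(c(k+t)-p)^{k+t-p}}\left(\frac{c(k+t)}{c(k+t)-p}\right)^{k+t-p}2^{o(k+t)}\log|V|\right)$$
up to polynomial factors. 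Maximizing over $p$ yields exactly the expression optimized in the proof of Theorem~\ref{theorem:cover} with $k$ replaced by $k+t$; setting $c=1.447$ places the maximum at $p=\alpha(k+t)$ with $\alpha\cong 0.55277$ and gives $2.61804^{k+t}$. Multiplying by the $|V|^{O(1)}$ iteration count and absorbing $2^{o(k+t)}$ and $\log|V|$ into $|V|^{O(1)}$ yields the claimed $O(2.61804^{k+t}|V|^{O(1)})$ bound.

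The main obstacle will be choosing $d$ large enough and verifying carefully that the product-style combinations in Step~\ref{step:tree11} do not overflow the intended representative-family size for the output family $\mathrm{L}[j,i',\ell']$. Because M-entries used are representative families for $p_U=i''+\ell''\leq (k+t)/d$ and L-entries are representative families for $p_W=p-p_U-|V_F\cap V_C|$, the product of their sizes must be shown, for a sufficiently large constant $d$, to be dominated by the single-family target size $\frac{(c(k+t))^{k+t}}{p^p(c(k+t)-p)^{k+t-p}}$ times a $2^{o(k+t)}$ slack; once this is in place, the per-call bound above applies uniformly and the overall $2.61804^{k+t}$ bound follows.
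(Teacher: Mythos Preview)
Your overall structure matches the paper's proof: correctness via Lemma~\ref{lemma:kltreecor}, then a running-time analysis that bounds the size of the family fed to each \alg{RepAlg} call and multiplies by the query cost from Theorem~\ref{theorem:newrep}. The gap is in how you propose to resolve what you yourself flag as the main obstacle.

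You claim that, for sufficiently large constant $d$, the product of the $\mathrm{M}$-entry size and the $\mathrm{L}$-entry size is dominated by the single-family target $\frac{(cq)^{q}}{p^p(cq-p)^{q-p}}$ times a $2^{o(q)}$ factor (with $q=k+t$). This is false for any \emph{fixed} $d$: the $\mathrm{M}$-entry used in Step~\ref{step:tree11} is a representative family of $r'$-sets with $r'\le q/d$, and its size can be as large as $\binom{cq}{q/d}2^{o(q)}=2^{\Theta(q)}$. Since $d$ must remain constant to keep the ${\cal C}_{v,u}$ enumeration at $|V|^{O(d)}=|V|^{O(1)}$, you cannot push this overhead into $2^{o(q)}$. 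Consequently your per-call formula, which already presumes the single-family input bound, does not hold, and the optimization does \emph{not} reduce to the one in Theorem~\ref{theorem:cover}.

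The paper's fix is to accept a $2^{\epsilon q}$ (not $2^{o(q)}$) overhead: it writes the input size explicitly as a product of two representative-family bounds, upper-bounds the small-side factor by $\binom{cq}{q/d}\le 2^{\epsilon q}$, and shows that the query-time factor becomes $\bigl(\frac{(c+\epsilon)q}{cq-r}\bigr)^{q-r}$ rather than $\bigl(\frac{cq}{cq-r}\bigr)^{q-r}$. Only then does it optimize, observing that as $\epsilon\to 0$ the resulting base tends to $2.61804$, so one first fixes $\epsilon$ small enough and then picks $d$ accordingly. Your plan needs this two-parameter $(\epsilon,d)$ argument in place of the $2^{o(k+t)}$ claim.
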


\begin{proof}
Let $q\!=\!k\!+\!t$, and $0\!<\!\epsilon\!<\!1$ be some constant. Choose some constant $d\!\geq\! 2$ satisfying ${cq \choose q/d} \!=\! O(2^{\epsilon q})$ and $\frac{1}{d}\!\leq\! \epsilon$.

Lemma \ref{lemma:kltreecor} and Step \ref{step:tree18} imply that \alg{TreeAlg} solves {\sc $(k,\!t)$-Tree}. It is easy to verify that for any $0\!\leq\! r^*\!\leq\! q$ and call \alg{RepAlg}$(V,\!k\!+\!t,\!{\cal S})$ executed by \alg{TreeAlg}, where ${\cal S}$ is a family of $r^*$-subsets of $V$, there is $0\leq r'\leq \min\{r^*,q/d\}$ such that $\displaystyle{|{\cal S}|\!\leq\!2^{o(q)}|V|^{O(d)}(\frac{(cq)^{q}}{(r^*\!-\!r')^{r^*\!-\!r'}(cq\!-\!(r^*\!-\!r'))^{q\!-\!(r^*\!-\!r')}})(\frac{(cq)^q}{{r'}^{r'}(cq\!-\!r')^{q\!-\!r'}})}$. Thus, the way \alg{RepAlg} proceeds implies that \alg{TreeAlg} runs in time

\[\begin{array}{l}
\vspace{1em}

\displaystyle{O(\!2^{o(\!q\!)}\!|V|\!^{O(\!d\!)}\!\max_{r=0}^q\max_{r'=0}^{\min\{q\!-\!r,q/d\}}\!\left\{\!(\frac{(cq)^{q}}{r^r\!(cq\!-\!r)^{q\!-\!r}})\!(\frac{(cq)^q}{{r'}^{r'}\!(cq\!-\!r')^{q\!-\!r'}})\!(\frac{cq}{cq\!-\!(r\!+\!r')})\!^{q\!-\!(r\!+\!r')}\!\right\}\!)}\\

\vspace{1em}





\displaystyle{=\!O(\!2^{o(q)}|V|^{O(1)}\!\max_{r=0}^q\!\left\{\!(\frac{(cq)^{q}}{r^r(cq\!-\!r)^{q\!-\!r}}){cq \choose q/d}(\frac{(c\!+\!1/d)q}{cq\!-\!r})^{q\!-\!r}\!\right\}\!)}\\

\vspace{0.3em}

\displaystyle{=\!O(\!2^{\epsilon q \!+\! o(q)}|V|^{O(1)}\!\max_{r=0}^q\!\left\{\!(\frac{(cq)^{q}}{r^r(cq\!-\!r)^{q\!-\!r}})(\frac{(c+\epsilon)q}{cq\!-\!r})^{q\!-\!r}\!\right\}\!).}
\end{array}
\]
By choosing $c\!=\!1.447$ and a small enough $\epsilon>0$, the maximum is obtained at $r\!=\!\alpha k$, where $\alpha\!\cong\!0.55277$. Thus, \alg{TreeAlg} runs in  time $O(2.61804^{k+t}|V|^{O(1)})$.\qed
\end{proof}
Lemmas \ref{lemma:tokltree} and \ref{lemma:kltree} imply the following theorem.

\begin{theorem}
{\sc $k$-IOB} can be solved in time $O(6.85414^k|V|^{O(1)})$.
\end{theorem}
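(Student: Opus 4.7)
The plan is to combine the two lemmas immediately preceding the theorem. By Lemma~\ref{lemma:kltree}, \alg{TreeAlg} solves {\sc $(k,t)$-Tree} in time $\tau(G,k,t)=O(2.61804^{k+t}|V|^{O(1)})$. By Lemma~\ref{lemma:tokltree}, this yields an algorithm for {\sc $k$-IOB} running in time
\[
O\!\left(|V|\Bigl(|E|+\sum_{t=1}^{k}\tau(G,k,t)\Bigr)\right)
= O\!\left(|V|^{O(1)}\sum_{t=1}^{k}2.61804^{k+t}\right).
\]

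The key observation is that in {\sc $(k,t)$-Tree} the parameter $t$ ranges only over $1\le t\le k$, so $k+t\le 2k$ throughout the sum. Consequently the geometric sum is dominated by its largest term $t=k$, and
\[
\sum_{t=1}^{k}2.61804^{k+t}\;=\;O\!\left(2.61804^{2k}\right).
\]
Since $2.61804^{2}\le 6.85414$, we conclude that the overall running time is $O(6.85414^{k}|V|^{O(1)})$, absorbing the polynomial factor in $|V|$ (and the additive $|E|$ term) into $|V|^{O(1)}$.

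I do not expect any real obstacle here: the theorem is a direct consequence of the two preceding lemmas together with the elementary inequality $t\le k$ and the squaring $2.61804^2\approx 6.85414$. The substantive work---both the reduction from {\sc $k$-IOB} to {\sc $(k,t)$-Tree} and the tradeoff-based dynamic programming that drives \alg{TreeAlg}---has already been carried out in Lemmas~\ref{lemma:tokltree} and~\ref{lemma:kltree}. \qed
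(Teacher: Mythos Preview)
Your proof is correct and is exactly the argument the paper intends: the theorem is stated as an immediate consequence of Lemmas~\ref{lemma:tokltree} and~\ref{lemma:kltree}, and your computation $\sum_{t=1}^{k}2.61804^{k+t}=O(2.61804^{2k})=O(6.85414^{k})$ makes the implicit step explicit.
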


\mysection{Improving Known Applications}\label{section:knownapp}

Fomin et al. \cite{representative} proved that {\sc Long Directed Cycle}, {\sc Weighted $k$-Path} and {\sc Weighted $k$-Tree} can be solved in times $O(8^k|E|\!\log^2\!|V|)$, $O(2.851^k|V|\!\log^2\!|V|)$ and $O(2.851^k|V|^{O(1)})$, respectively. By replacing their computation of representative families with our scheme, \alg{RepAlg}, we obtain for these problems exact algorithms with improved running times of $O(6.75^k|E|\!\log^2|V|)$, $O(2.619^k|V|\!\log^2|V|)$ and $O(2.619^k|V|^{O(1)})$, respectively. We give the details in Appendix \ref{app:knownapp}.

\bibliographystyle{splncs03}
\bibliography{references}

\newpage

\appendix

\mysection{An Illustration of a Representative Family}\label{app:illustration}

\begin{figure}
\medskip
\centering
\frame{
\includegraphics[scale=0.29]{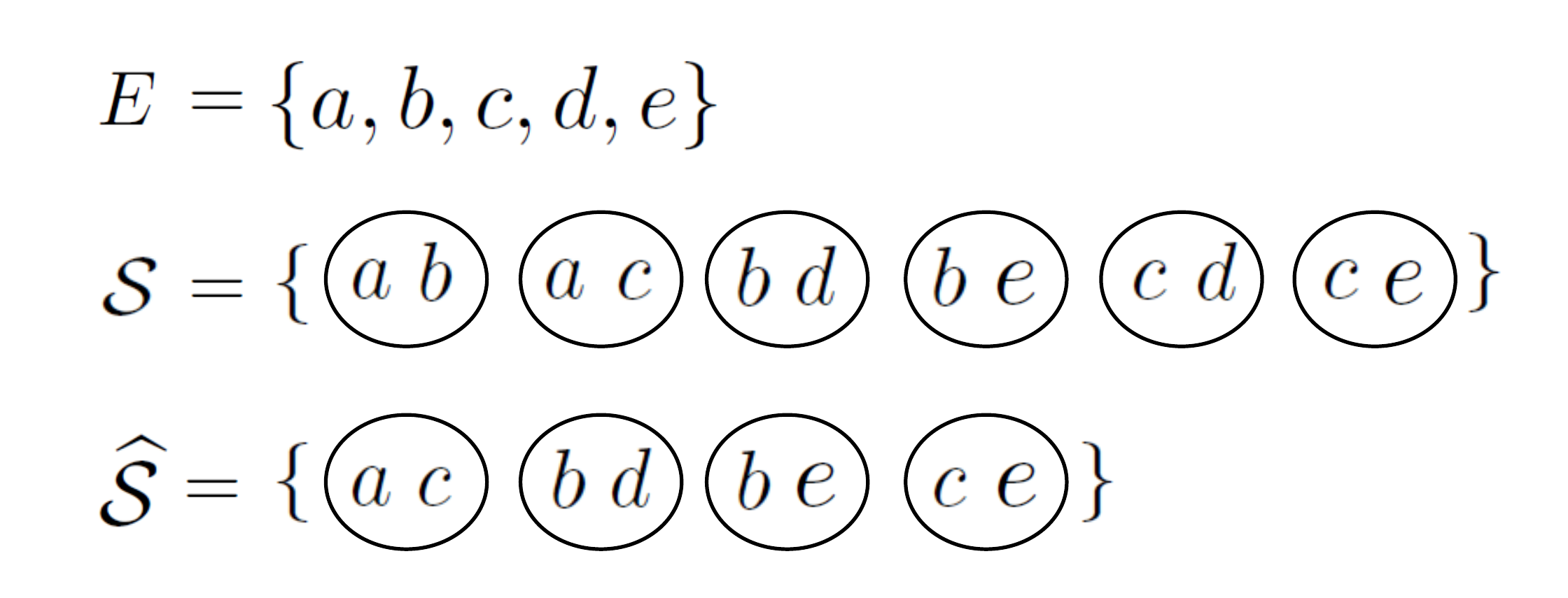}
}
\caption{An illustration of a subfamily $\widehat{\cal S}\!\subseteq\!{\cal S}$ that represents ${\cal S}$ in a matroid $U_{5,4}$, with $p\!=\!2$.}
\label{fig:2}
\smallskip
\end{figure}

{\noindent Figure~\ref{fig:2} illustrates a representative family corresponding to the matroid $U_{5,4}\!=\!\{E,\!{\cal I}\}$, where ${\cal I}\!=\!\{X\!\subseteq\! E\!:\! |X|\!\leq\! 4\}$.  The subfamily $\widehat{\cal S}\!\subseteq\!{\cal S}$ represents ${\cal S}$, since for every pair of sets $X\!\in\!{\cal S}$, and $Y\!\subseteq\!E\setminus X$ such that $|Y|\!\leq\!(k\!-\!p)\!=\!2$, there is a set $\widehat{X}\!\in\!\widehat{\cal S}$ disjoint from $Y$. For example, the set $X\!=\!\{a,\!b\}\in{\cal S}$ is disjoint from $\{c,\!d\},\{c,\!e\}$, and $\{d,\!e\}$. Indeed, the subfamily $\widehat{\cal S}$ contains the set $\{b,\!e\}$ that is disjoint from $\{c,\!d\}$, the set $\{b,\!d\}$ that is disjoint from $\{c,\!e\}$, and the set $\{a,\!c\}$ that is disjoint from $\{d,\!e\}$.
}

\mysection{Proof of Theorem \ref{theorem:newrep} (Continued)}\label{app:theoremcont}

In the rest of the proof, we use the following notation. Given nonnegative integers $p,s$ and $t$, let ${\cal Z}^p_{s,t}$ denote the set of tuples $(p_1,\!p_2,\ldots,p_t)$ of nonnegative integers, each of value at most $s$, whose sum is $p$. Note that $|{\cal Z}^p_{s,t}|\!\leq\! {p+t-1\choose t-1}\!\leq\! 2^{t\!\log (p + t)}$.

Recall that our goal is to decrease the $\tau_{\cal F}^1$ and $\tau_{\chi}^1$ parameters of Separator 1 to be within the required bounds for $\tau_{\cal F}^*$ and $\tau_{\chi}^*$. We achieve this goal by repeatedly applying the following two lemmas:

\begin{mylemma}[\cite{representative}]\label{lemma:const2}
If there is an $(\!n,\!k,\!p\!)$-separator with parameters ${\cal C}(\!n,\!k,\!p),\tau_{\cal F}(\!n,$\\$k,\!p),\Delta(n,\!k,\!p)$ and $\tau_{\chi}(n,\!k,\!p)$, then there is an $(n,\!k,\!p)$-separator with parameters:
\begin{itemize}
\item $\displaystyle{{\cal C}'(n,k,p)\leq {\cal C}(k^2,k,p)k^{O(1)}\log n}$.
\item $\displaystyle{\tau_{\cal F}'(n,k,p) = O(\tau_{\cal F}(k^2,k,p) + {\cal C}(k^2,k,p)k^{O(1)}n\log n)}$.
\item $\displaystyle{\Delta'(n,k,p)\leq \Delta(k^2,k,p)k^{O(1)}\log n}$.
\item $\displaystyle{\tau_{\chi}'(n,k,p) = O(\tau_{\chi}(k^2,k,p) + \Delta(k^2,k,p)k^{O(1)}\log n)}$.
\end{itemize}\end{mylemma}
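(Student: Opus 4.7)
The plan is a splitter-based self-reduction: produce an $(n,k,p)$-separator by lifting an $(k^2,k,p)$-separator through a small family of hash functions $E\to[k^2]$. The first step is to construct, deterministically, a family $h_1,\ldots,h_q\colon E\to[k^2]$ with $q=k^{O(1)}\log n$ such that for every $k$-subset $T\subseteq E$, at least one $h_i$ is injective on $T$. This is exactly an $(n,k,k^2)$-splitter / perfect hash family in the sense of Naor--Schulman--Srinivasan, and can be built in time $k^{O(1)}n\log n$.

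The second step is to invoke the hypothesis on the universe $[k^2]$ to get a single $(k^2,k,p)$-separator $(\mathcal{F}^{*},\chi^{*})$ with parameters ${\cal C}(k^2,k,p)$, $\tau_{\cal F}(k^2,k,p)$, $\Delta(k^2,k,p)$, $\tau_{\chi}(k^2,k,p)$, and then define the lifted family $\mathcal{F}'=\{\,h_i^{-1}(F)\ :\ i\in[q],\ F\in\mathcal{F}^{*}\,\}$ over $E$. Its size is at most $q\cdot{\cal C}(k^2,k,p)$, matching the required ${\cal C}'$ bound; it is built in time $O(\tau_{\cal F}(k^2,k,p)+q\cdot{\cal C}(k^2,k,p)\cdot n)$, matching $\tau_{\cal F}'$; and a query $\chi'(S)$ for $S\in\binom{E}{p}$ is answered by iterating over $i\in[q]$, evaluating $h_i(S)\subseteq[k^2]$, invoking $\chi^{*}(h_i(S))$, and outputting the corresponding lifts, which stays within the stated $\tau_{\chi}'$.

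Correctness is immediate from the splitter property: for any $X\in\binom{E}{p}$ and $Y\in\binom{E\setminus X}{k-p}$, the set $X\cup Y$ has size $k$, so some $h_i$ is injective on it; then $h_i(X)$ and $h_i(Y)$ are disjoint subsets of $[k^2]$ of sizes $p$ and $k-p$, so by the inner separator there exists $F\in\mathcal{F}^{*}$ with $h_i(X)\subseteq F$ and $h_i(Y)\cap F=\emptyset$, and the lift $h_i^{-1}(F)\in\mathcal{F}'$ contains $X$ and is disjoint from $Y$.

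The main obstacle is the $\Delta'$ bound. Writing $|\chi'(S)|\le\sum_{i=1}^{q}|\{F\in\mathcal{F}^{*}:h_i(S)\subseteq F\}|$, the contribution from each $h_i$ that is injective on $S$ is at most $\Delta(k^2,k,p)$, which already yields the desired $q\cdot\Delta(k^2,k,p)=\Delta(k^2,k,p)\cdot k^{O(1)}\log n$; but an $h_i$ that collapses $S$ to $p'<p$ distinct points enjoys no such bound a priori, since $\Delta$ is defined only for $p$-subsets. I plan to resolve this by strengthening the hash family so that injectivity holds uniformly on $p$-subsets (e.g.\ by composing the $(n,k,k^2)$-splitter with an auxiliary $(n,p)$-perfect hash family, whose size is also $k^{O(1)}\log n$), or equivalently by filtering the lifted family so that for each $i$ only $F$'s corresponding to an injective action of $h_i$ on some underlying $p$-subset are retained. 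Once this collision-avoidance is in place, every $h_i$ contributes at most $\Delta(k^2,k,p)$ to $|\chi'(S)|$, and all four parameter bounds follow by direct bookkeeping.
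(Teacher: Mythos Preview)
Your splitter-based reduction is exactly the construction behind this lemma in \cite{representative}; the present paper does not prove the lemma at all but merely quotes it (as Lemma~4.4 of \cite{representative}), so your approach matches the intended one.

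On the $\Delta'$ obstacle you flag: the concern is real, but the fixes you propose (an auxiliary $(n,p)$-hash family, or filtering $\mathcal{F}'$) are more elaborate than necessary and, as stated, do not obviously work---a perfect hash family only guarantees that \emph{some} function is injective on a given $p$-set, not that \emph{every} $h_i$ in your splitter is. The clean resolution is simply to have $\chi'(S)$ skip every $i$ for which $h_i$ is not injective on $S$ (this is forced anyway, since $\chi^{*}$ is only defined on $p$-subsets of $[k^2]$). The resulting $\chi'(S)$ may then be a strict subset of $\{F'\in\mathcal{F}':S\subseteq F'\}$, so it violates the letter of the separator definition as phrased in this paper; but trace through how the separator is actually used to build $\widehat{\cal S}$: one needs only that (i) $F'\in\chi'(S)$ implies $S\subseteq F'$, and (ii) for every pair $(X,Y)$ the separating witness $F'$ lies in $\chi'(X)$. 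Property~(ii) holds because the witness comes from an $h_i$ injective on $X\cup Y$, hence on $X$. With this observation all four parameter bounds follow exactly as you compute them, with $|\chi'(S)|\le q\cdot\Delta(k^2,k,p)$ immediately.
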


\begin{mylemma}[\cite{representative}]\label{lemma:const3}
Let $s\!=\!\lfloor\log^2k\rfloor$ and $t\!=\!\lceil\frac{k}{s}\rceil$. If there is an $(n,\!k,\!p)$-separator with parameters ${\cal C}(n,\!k,\!p),\tau_{\cal F}(n,\!k,\!p),\Delta(n,\!k,\!p)$ and $\tau_{\chi}(n,\!k,\!p)$, then there is an $(n,\!k,\!p)$-separator with parameters:
\begin{itemize}
\item $\displaystyle{{\cal C}'(n,k,p)\leq 2^{O(t\log n)}\cdot\sum_{(p_1,\ldots,p_t)\in{\cal Z}^p_{s,t}}\prod_{i\leq t}{\cal C}(n,s,p_i)}$.
\item $\displaystyle{\tau_{\cal F}'(n,k,p) = O(\sum_{\widehat{p}\leq s}\tau_{\cal F}(n,s,\widehat{p}) + {\cal C}'(n,k,p)n^{O(1)})}$.
\item $\displaystyle{\Delta'(n,k,p)\leq 2^{O(t\log n)}\cdot\max_{(p_1,\ldots,p_t)\in{\cal Z}^p_{s,t}}\prod_{i\leq t}\Delta(n,s,p_i)}$.
\item $\displaystyle{\tau_{\chi}'(n,k,p) = O(\Delta'(n,k,p)\cdot n^{O(1)} + 2^{O(t\log n)}\cdot\sum_{\widehat{p}\leq s}\tau_{\chi}(n,s,\widehat{p}))}$.
\end{itemize}
\end{mylemma}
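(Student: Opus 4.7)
The plan is to derive the theorem as a direct corollary of Lemma \ref{lemma:tokltree} and Lemma \ref{lemma:kltree}. By Lemma \ref{lemma:kltree}, the function $\tau(G,k,t)$ from Lemma \ref{lemma:tokltree} can be taken to be $O(2.61804^{k+t}|V|^{O(1)})$. Substituting into the running-time bound of Lemma \ref{lemma:tokltree} gives the overall cost of solving {\sc $k$-IOB} as
\[
O\Bigl(|V|\bigl(|E| + \sum_{1\leq t\leq k} 2.61804^{k+t}|V|^{O(1)}\bigr)\Bigr).
\]

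The only step requiring care is evaluating the sum $\sum_{1\leq t\leq k} 2.61804^{k+t}$. I would factor out $2.61804^k$ and bound the remaining geometric series $\sum_{1\leq t\leq k}2.61804^t$ by its largest term times a constant (since the ratio $2.61804 > 1$ is bounded away from $1$), yielding $O(2.61804^k)$. Multiplying the two factors gives $O(2.61804^{2k}|V|^{O(1)})$, and since $2.61804^2 \leq 6.85414$, this is $O(6.85414^k|V|^{O(1)})$. The $|V||E|$ term coming from Lemma \ref{lemma:tokltree} is absorbed into $|V|^{O(1)}$.

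There is no substantive obstacle here: the whole argument is a one-line combination of the two lemmas together with a geometric-sum estimate. The real work lies in the proofs of Lemmas \ref{lemma:tokltree} (which is cited from \cite{ipec13}) and \ref{lemma:kltree} (established in Section \ref{section:kiob} via the scheme \alg{TreeAlg} and the tradeoff-based representative-family computation of Theorem \ref{theorem:newrep} with $c=1.447$). Given those, the theorem follows by the arithmetic above.
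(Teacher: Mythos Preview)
Your proposal does not address the stated lemma at all. The statement you were asked to prove is Lemma~\ref{lemma:const3}, a structural result about $(n,k,p)$-separators: given a separator with parameters ${\cal C},\tau_{\cal F},\Delta,\tau_{\chi}$, one can build a new separator whose parameters are expressed via products and sums over tuples in ${\cal Z}^p_{s,t}$. What you have written instead is the (correct, and essentially one-line) derivation of the {\sc $k$-IOB} theorem from Lemmas~\ref{lemma:tokltree} and~\ref{lemma:kltree}. That argument has nothing to do with separators, the quantities ${\cal C}',\tau_{\cal F}',\Delta',\tau_{\chi}'$, or the set ${\cal Z}^p_{s,t}$; none of the objects in the statement of Lemma~\ref{lemma:const3} appear anywhere in your proposal.

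For the record, the paper does not prove Lemma~\ref{lemma:const3} either: it is quoted from~\cite{representative} (it is their Lemma~4.5) and used as a black box in the proof of Theorem~\ref{theorem:newrep}. A genuine proof would proceed by partitioning the universe $E$ into $t$ blocks of size roughly $s$, building an $(n,s,p_i)$-separator on each block for every possible $p_i$, and then, for each tuple $(p_1,\ldots,p_t)\in{\cal Z}^p_{s,t}$, taking all products of one set from each block's separator; the $2^{O(t\log n)}$ factor accounts for the choice of partition and tuple. Your geometric-series computation, while fine for the {\sc $k$-IOB} theorem, is simply not a proof of the lemma in question.
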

Applying Lemma \ref{lemma:const2} to Separator 1, we get a separator with parameters:

\begin{itemize}
\item $\displaystyle{{\cal C}^2(n,k,p) = O(\frac{(ck)^{k}}{p^p(ck\!-\!p)^{k\!-\!p}}k^{O(1)}\!\log n)\! =\! O({ck\choose p}(\frac{ck\!-\!p}{ck})^{(c\!-\!1)k}k^{O(1)}\!\log n)}$.
\item $\displaystyle{\tau_{\cal F}^2(n,k,p) = 2^{k^{O(k)}} + \frac{(ck)^{k}}{p^p(ck-p)^{k-p}}k^{O(1)}n\log n}$.
\item $\displaystyle{\Delta^2(n,k,p) = O((\frac{ck}{ck-p})^{k-p}k^{O(1)}\log n)}$.
\item $\displaystyle{\tau_{\chi}^2(n,k,p) = k^{O(k)}\log n}$.
\end{itemize}
We now apply Lemma \ref{lemma:const3} to this separator. Recall that in Lemma \ref{lemma:const3}, we set $s=\lfloor\log^2k\rfloor$ and $t=\lceil\frac{k}{s}\rceil$. This yields a separator with parameters:

\[\begin{array}{ll}
{\cal C}^3(n,k,p) & \leq \displaystyle{2^{O(t\log n)}\cdot\sum_{(p_1,\ldots,p_t)\in{\cal Z}^p_{s,t}}\prod_{i\leq t}{\cal C}^2(n,s,p_i)}\\

&\leq \displaystyle{2^{O(t\log n)}\max_{(p_1,\ldots,p_t)\in{\cal Z}^p_{s,t}}\prod_{i\leq t}[{cs\choose p_i}(\frac{cs-p_i}{cs})^{(c-1)s}s^{O(1)}\log n]}\\

&\leq \displaystyle{2^{O(t\log n)}\cdot{ck\choose p}\cdot\max_{(p_1,\ldots,p_t)\in{\cal Z}^p_{s,t}}\prod_{i\leq t}(1-\frac{p_i}{cs})^{(c-1)s}}\\

&\leq \displaystyle{2^{O(t\log n)}\cdot {ck\choose p}(1-\frac{p}{ck})^{(c-1)k}}
\end{array}\]


\[\begin{array}{lll}
\tau_{\cal F}^3(n,\!k,\!p)\!&=\!&\displaystyle{O(\sum_{\widehat{p}\leq s}\tau_{\cal F}^2(n,s,\widehat{p}) + {\cal C}^3(n,k,p)n^{O(1)})}\\

&\leq\!&\displaystyle{2^{s^{O(s)}}\!+\!\max_{\widehat{p}\leq s}(\frac{(cs)^s}{\widehat{p}^{\widehat{p}}(cs\!-\!\widehat{p})^{s\!-\!\widehat{p}}})s^{O(1)}n\!\log n + 2^{O(t\!\log n)}{ck\choose p}(1\!-\!\frac{p}{ck})^{(c\!-\!1)k}}\\

&\leq\!&\displaystyle{2^{(\log k)^{O(\log^2 k)}} + 2^{O(t\log n)}\cdot {ck\choose p}(1-\frac{p}{ck})^{(c-1)k}}
\end{array}\]


\[\begin{array}{lll}
\Delta^3(n,k,p) &\leq \displaystyle{2^{O(t\log n)}\cdot\max_{(p_1,\ldots,p_t)\in{\cal Z}^p_{s,t}}\prod_{i\leq t}\Delta^2(n,s,p_i)}&\\

&\leq \displaystyle{2^{O(t\log n)}\cdot \max_{(p_1,\ldots,p_t)\in{\cal Z}^p_{s,t}}\prod_{i\leq t}[(\frac{cs}{cs-p_i})^{s-p_i}s^{O(1)}\log n]}&\\

&\leq \displaystyle{2^{O(t\log n)}\cdot\max_{(q_1,\ldots,q_t)\in{\cal Z}^{k-p}_{s,t}}\prod_{i\leq t}(\frac{cs}{(c-1)s+q_i})^{q_i}}&\\

&\leq \displaystyle{2^{O(t\log n)}\cdot(ck)^{k-p}\cdot\max_{(q_1,\ldots,q_t)\in{\cal Z}^{k-p}_{s,t}}\prod_{i\leq t}(\frac{1}{((c-1)s+q_i)t})^{q_i}}& = (*)
\end{array}\]
By Gibbs' inequality, we have that

\[(*)\leq \displaystyle{2^{O(t\log n)}\cdot(ck)^{k-p}\cdot\frac{1}{((c-1)st+(k-p))^{k-p}}}\leq \displaystyle{2^{O(t\log n)}\cdot(\frac{ck}{ck-p})^{k-p}}\]



\[\begin{array}{ll}
\tau_{\chi}^3(n,k,p) &= \displaystyle{O(2^{O(t\log n)}(\frac{ck}{ck-p})^{k-p}n^{O(1)} + 2^{O(t\log n)}\sum_{\widehat{p}\leq s}\tau_{\chi}^2(n,s,\widehat{p}))}\\

&\leq \displaystyle{2^{O(t\log n)}\cdot(\frac{ck}{ck-p})^{k-p} + 2^{O(t\log n)}s^{O(s)}\log n}\\

&\leq \displaystyle{2^{O(t\log n)}\cdot(\frac{ck}{ck-p})^{k-p}}
\end{array}\]
Applying Lemma \ref{lemma:const2} to this separator, we get a separator with parameters:

\begin{itemize}
\item $\displaystyle{{\cal C}^4(n,k,p)\leq 2^{O(\frac{k}{\log k})}\cdot {ck\choose p}(1-\frac{p}{ck})^{(c-1)k}\log n}$.

\item $\displaystyle{\tau_{\cal F}^4(n,k,p) = O(2^{(\log k)^{O(\log^2k)}} + 2^{O(\frac{k}{\log k})}\cdot {ck\choose p}(1-\frac{p}{ck})^{(c-1)k}n\log n)}$.

\item $\displaystyle{\Delta^4(n,k,p)\leq 2^{O(\frac{k}{\log k})}\cdot(\frac{ck}{ck-p})^{k-p}\log n}$.

\item $\displaystyle{\tau_{\chi}^4(n,k,p) = O(2^{O(\frac{k}{\log k})}\cdot(\frac{ck}{ck-p})^{k-p}\log n)}$.
\end{itemize}
We next apply Lemma \ref{lemma:const3} again. As in the analysis of the third separator, we set $s=\lfloor\log^2k\rfloor$ and $t=\lceil\frac{k}{s}\rceil$. This yields a separator with parameters:

\[\begin{array}{ll}
{\cal C}^5(n,k,p) &\leq \displaystyle{2^{O(t\log n)}\cdot\sum_{(p_1,\ldots,p_t)\in{\cal Z}^p_{s,t}}\prod_{i\leq t}{\cal C}^4(n,s,p_i)}\\

&\leq \displaystyle{2^{O(t\log n)}\max_{(p_1,\ldots,p_t)\in{\cal Z}^p_{s,t}}\prod_{i\leq t}[2^{O(\frac{s}{\log s})}\cdot {cs\choose p_i}(1-\frac{p_i}{cs})^{(c-1)s}\log n]}\\

&\leq \displaystyle{2^{O(t\log n + \frac{k}{\log\log k})}{ck\choose p}\max_{(p_1,\ldots,p_t)\in{\cal Z}^p_{s,t}}[\prod_{i\leq t}(1-\frac{p_i}{cs})]^{(c-1)s}}\\

&\leq \displaystyle{2^{O(t\log n + \frac{k}{\log\log k})}{ck\choose p}(1-\frac{p}{ck})^{(c-1)k}}\\

&\leq \displaystyle{2^{O(t\log n + \frac{k}{\log\log k})}\frac{(ck)^{k}}{p^p(ck-p)^{k-p}}}
\end{array}\]


\[\begin{array}{lll}
\tau_{\cal F}^5(n,k,p) &=& \displaystyle{O(\sum_{\widehat{p}\leq s}\tau_{\cal F}^4(n,s,\widehat{p}) + {\cal C}^5(n,k,p)n^{O(1)})}\\

&\leq & \displaystyle{s2^{(\log s)^{O(\log^2 s)}} + \max_{\widehat{p}\leq s}(2^{O(\frac{s}{\log s})}\cdot {cs\choose \widehat{p}}(1-\frac{\widehat{p}}{cs})^{(c-1)s}n\log n)}\\

&&+ \displaystyle{2^{O(t\log n + \frac{k}{\log\log k})}\frac{(ck)^{k}}{p^p(ck-p)^{k-p}}}\\

&\leq & \displaystyle{(\log^2 k)2^{(\log\log k)^{O(\log^2\log k)}} + 2^{O(t\log n + \frac{k}{\log\log k})}\frac{(ck)^{k}}{p^p(ck-p)^{k-p}}}\\

&\leq & \displaystyle{2^{O(t\log n + \frac{k}{\log\log k})}\frac{(ck)^{k}}{p^p(ck-p)^{k-p}}}

\end{array}\]
In the following analysis of $\Delta^5(n,k,p)$, the last transition is achieved as in the analysis of $\Delta^3(n,k,p)$.

\[\begin{array}{ll}
\Delta^5(n,k,p) &\leq \displaystyle{2^{O(t\log n)}\cdot\max_{(p_1,\ldots,p_t)\in{\cal Z}^p_{s,t}}\prod_{i\leq t}\Delta^4(n,s,p_i)}\\

&\leq \displaystyle{2^{O(t\log n)}\max_{(p_1,\ldots,p_t)\in{\cal Z}^p_{s,t}}\prod_{i\leq t}(2^{O(\frac{s}{\log s})}\cdot(\frac{cs}{cs-p_i})^{s-p_i}\log n)}\\

&\leq \displaystyle{2^{O(t\log n + \frac{k}{\log\log k})}\max_{(p_1,\ldots,p_t)\in{\cal Z}^p_{s,t}}\prod_{i\leq t}(\frac{cs}{cs-p_i})^{s-p_i}}\\

&\leq \displaystyle{2^{O(t\log n + \frac{k}{\log\log k})}(\frac{ck}{ck-p})^{k-p}}
\end{array}\]


\[\begin{array}{ll}
\tau_{\chi}^5(n,k,p) &= \displaystyle{2^{O(t\log n + \frac{k}{\log\log k})}(\frac{ck}{ck-p})^{k-p}n^{O(1)} + 2^{O(t\log n)}\sum_{\widehat{p}\leq s}\tau_{\chi}^4(n,s,\widehat{p})}\\

&\leq \displaystyle{2^{O(t\log n + \frac{k}{\log\log k})}(\frac{ck}{ck-p})^{k-p} + 2^{O(t\log n)}s^{O(s)}\log n}\\

&\leq \displaystyle{2^{O(t\log n + \frac{k}{\log\log k})}(\frac{ck}{ck-p})^{k-p}}

\end{array}\]
Applying Lemma \ref{lemma:const2} to this separator, we get a separator with parameters:

\begin{itemize}
\item $\displaystyle{{\cal C}^6(n,k,p)\leq 2^{O(\frac{k}{\log\log k})}\frac{(ck)^{k}}{p^p(ck-p)^{k-p}}\log n}$.
\item $\displaystyle{\tau_{\cal F}^6(n,k,p)\leq 2^{O(\frac{k}{\log\log k})}\frac{(ck)^{k}}{p^p(ck-p)^{k-p}}n\log n}$.
\item $\displaystyle{\tau_{\chi}^6(n,k,p)\leq 2^{O(\frac{k}{\log\log k})}(\frac{ck}{ck-p})^{k-p}\log n}$.
\end{itemize}
This separator has the desired parameters ${\cal C}^*(n,k,p)$, $\tau_{\cal F}^*(n,k,p)$ and $\tau_{\chi}^*(n,k,p)$, and we thus conclude the proof of the theorem.\qed

\vspace{-0.25em}

\mysection{Pseudocode for the Computation Scheme \alg{RepAlg}}\label{app:scheme}

\vspace{-0.4em}

We give below the pseudocode for \alg{RepAlg}, the computation scheme developed in Section \ref{section:computation}.

\renewcommand{\thealgorithm}{3}
\floatname{algorithm}{Algorithm}
\begin{algorithm}[!ht]
\caption{\alg{RepAlg}($E,k,{\cal S},w$)}
\begin{algorithmic}[1]\label{alg:scheme}
\STATE let $n'=\log^4\log^2 k$, and $k'=\log^2\log^2 k$.

\FOR{$p'=0,\ldots,\min\{k',p\}$}\label{step:repalg1}
	\STATE construct an $(n',\!k',\!p')$-separator with parameters ${\cal C}^1(n',\!k',\!p')$, $\tau_{\cal F}^1(n',\!k',\!p')$, $\Delta^1(n',\!k',\!p')$, and $\tau_{\chi}^1(n',\!k',\!p')$.
\ENDFOR

\STATE\label{step:repalg2} given the results of Step \ref{step:repalg1}, repeatedly apply Lemmas \ref{lemma:const2} and \ref{lemma:const3} to construct an $(n,p,k)$-separator with parameters ${\cal C}^*(n,k,p)$, $\tau_{\cal F}^*(n,k,p)$, and $\tau_{\chi}^*(n,k,p)$.

\STATE given the result of Step \ref{step:repalg2}, compute the corresponding family ${\cal F}$ and subfamilies $\chi(S)$, for all $S\!\in\!{\cal S}$.

\FORALL{$F\!\in\!{\cal F}$}
	\STATE let $z_F\!\in\!\{0,\!1\}$ be an indicator variable for using $F$, initialized to 0.
\ENDFOR
	
\STATE order ${\cal S}\!=\!\{S_1,\ldots,S_{|{\cal S}|}\}$, such that $w(S_{i-1})\!\geq\! w(S_i)$ ($w(S_{i-1})\!\leq\! w(S_i)$), for all $2\!\leq\! i\!\leq\! |{\cal S}|$.

\STATE initialize $\widehat{\cal S}\Leftarrow\emptyset$.

\FOR{$i=1,\ldots,|{\cal S}|$}

	\STATE {\bf for all} $F\!\in\!\chi(S_i)$ such that $z_F\!=\!0$ {\bf do} $\widehat{\cal S}\Leftarrow \widehat{\cal S}\cup\{S_i\}$, and $z_F\Leftarrow 1$. {\bf end for}

\ENDFOR

\STATE return $\widehat{\cal S}$.
\end{algorithmic}
\end{algorithm}

\vspace{-0.25em}

\mysection{Proof of Lemma \ref{lemma:coveradd}}\label{app:lemmacoveradd}

\vspace{-0.4em}

In this section we prove the following lemma.

\vspace{-0.25em}

\begin{mylemma}\label{lemma:generalcov}
Consider an iteration of Step \ref{pcalg:ite}	in \alg{PCAlg}, corresponding to some~values $i,j$ and $\ell$. Let ${\cal A}\!=\!(\bigcup_{1\leq j'\leq j}\mathrm{M}[i\!-\!1,j',\ell\!-\!1])\cup\{\emptyset\}$. Then, for all $0\!\leq\! r'\!\leq\! r$ and $0\!\leq\! j'\!\leq\! j$, ${\cal A}_{r',j'}$ represents ${\cal A}^*_{r',j'} \!=\! \{S\!\cup\! S'_i: S\!\in\! {\cal A}, S'_i\!\subseteq \!\{s_1,\!...,s_{r'}\}, |S\!\cup\!S'_i|\!=\!j'\}$.
\end{mylemma}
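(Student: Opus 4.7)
I would prove Lemma~\ref{lemma:generalcov} by induction on $r'$, with the statement quantified over all $0 \le j' \le j$ inside each inductive layer. The two ingredients I will lean on at every step are Theorem~\ref{theorem:newrep} (to replace a family by one that represents it via \alg{RepAlg}) and Observation~\ref{obs:transitive} (to chain representations).

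For the \textbf{base case} $r' = 0$, the argument is pure bookkeeping. When $j' = 0$ both sides collapse to $\{\emptyset\}$, using that $\emptyset \in {\cal A}$. When $j' \ge 1$, by Step~\ref{pcalg:new1} we have ${\cal A}_{0, j'} = \mathrm{M}[i-1, j', \ell-1]$, while ${\cal A}^*_{0, j'}$ equals $\{S \in {\cal A} : |S| = j'\}$, which coincides with $\mathrm{M}[i-1, j', \ell-1]$ because the other summands $\mathrm{M}[i-1, j'', \ell-1]$ inside the union defining ${\cal A}$ contribute only sets of size $j'' \ne j'$.

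For the \textbf{inductive step}, I would fix $r' \ge 1$ and $j'$, denote by ${\cal B}$ the family passed to \alg{RepAlg} in Step~\ref{pcalg:new3}, and note that by Theorem~\ref{theorem:newrep} the output ${\cal A}_{r', j'}$ represents ${\cal B}$. By Observation~\ref{obs:transitive} it then suffices to show that ${\cal B}$ represents ${\cal A}^*_{r', j'}$. Given $X \in {\cal A}^*_{r', j'}$ and $Y \subseteq U \setminus X$ with $|X \cup Y| \le k$, I split on whether $X$ already lies in ${\cal A}^*_{r'-1, j'}$. If it does, the inductive hypothesis applied to ${\cal A}_{r'-1, j'} \subseteq {\cal B}$ delivers a representative directly. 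Otherwise, I would argue that some decomposition $X = S \cup S'_i$ satisfies $s_{r'} \in S'_i$ and $s_{r'} \notin S$; this forces $X \setminus \{s_{r'}\}$ to be a member of ${\cal A}^*_{r'-1, j'-1}$, so invoking the inductive hypothesis on it with the enlarged forbidden set $Y \cup \{s_{r'}\}$ yields some $\widehat{X}' \in {\cal A}_{r'-1, j'-1}$, and $\widehat{X}' \cup \{s_{r'}\}$ then lies in ${\cal B}$ and witnesses representation for $X$.

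The \textbf{main obstacle} I anticipate is justifying, in the second sub-case above, that one may assume $s_{r'} \in S'_i \setminus S$. Decompositions of $X$ are not unique, so a priori $s_{r'}$ might be forced into $S$. The resolution is the following observation: if \emph{every} decomposition had $s_{r'} \in S$, then for any such decomposition $S \cup (S'_i \setminus \{s_{r'}\}) = S \cup S'_i = X$ would exhibit $X$ as a member of ${\cal A}^*_{r'-1, j'}$, contradicting the case assumption; in particular this also forces $s_{r'} \in X$. Once this combinatorial subtlety is settled, the rest of the proof is a routine chaining of Theorem~\ref{theorem:newrep} and Observation~\ref{obs:transitive}, and Lemma~\ref{lemma:coveradd} follows as the special case $r' = r$, $j' = j$.
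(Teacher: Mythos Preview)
Your proposal is correct and follows essentially the same route as the paper: induction on $r'$, reduction via Observation~\ref{obs:transitive} to showing that the input family ${\cal B}$ of Step~\ref{pcalg:new3} represents ${\cal A}^*_{r',j'}$, and the same two-case analysis based on whether $X\in{\cal A}^*_{r'-1,j'}$ (equivalently, whether $s_{r'}\in S'_i\setminus S$ for some decomposition). The only point you leave implicit that the paper spells out is the containment ${\cal B}\subseteq{\cal A}^*_{r',j'}$, which is needed for ${\cal B}$ to \emph{represent} ${\cal A}^*_{r',j'}$ in the sense of Definition~\ref{def:repfam}; this is immediate from the inductive hypothesis ${\cal A}_{r'-1,j''}\subseteq{\cal A}^*_{r'-1,j''}$, so it is not a real gap.
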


\vspace{-0.25em}

{\noindent Note that Lemma \ref{lemma:coveradd} is a special case of Lemma \ref{lemma:generalcov} when $r'\!=\!r$, and $j'\!=\!j$.}

\vspace{-0.25em}

\begin{proof} By Step \ref{pcalg:new1}, the claim holds for $r'\!=\!0$ and all $0\!\leq\! j'\!\leq\! j$. Next consider some $1\!\leq\! r'\!\leq\! r$, and assume that the claim holds for all $0\!\leq\! r''\!<\!r'$ and $0\!\leq\! j''\!\leq\! j$. By Step \ref{pcalg:new3} and Observation \ref{obs:transitive}, it is enough to prove that ${\cal B} \!=\! \{S\!\cup\!\{\!s_{r'}\!\}\!:\! j'\!\geq\! 1, S\!\in\! {\cal A}_{r'\!-\!1,j'\!-\!1}, s_{r'}\!\notin\! S\}\!\cup\!{\cal A}_{r'\!-\!1,j'}$ represents ${\cal A}^*_{r',j'}$. First, we get that ${\cal B}\!\subseteq\! {\cal A}^*_{r',j'}$ as~follows:

\vspace{-0.7em}

\[
\begin{array}{ll}

{\cal B} & \displaystyle{= \{S\!\cup\!\{s_{r'}\}\!:\! j'\!\geq\! 1, S\!\in\! {\cal A}_{r'\!-\!1,j'\!-\!1}, s_{r'}\!\notin\! S\}\!\cup\!{\cal A}_{r'\!-\!1,j'}}\\

& \displaystyle{\subseteq^{(1)} \{S\!\cup\!\{s_{r'}\}\!:\! j'\!\geq\! 1, S\!\in\! {\cal A}^*_{r'\!-\!1,j'\!-\!1}, s_{r'}\!\notin\! S\}\!\cup\!{\cal A}^*_{r'\!-\!1,j'}}\\

& \displaystyle{=^{(2)} \{S\!\cup\! S'\!:\! S\!\in\! {\cal A}, S'\!\subseteq \!\{s_1,\ldots,s_{r'}\}, |S\!\cup\!S'|\!=\!j'\}} = \displaystyle{{\cal A}^*_{r',j'}}.
\end{array}
\]

\vspace{-0.5em}

$(1)$ by the induction hypothesis.

$(2)$ by the definitions of ${\cal A}^*_{r'-1,j'-1}$,${\cal A}^*_{r'-1,j'}$.

\medskip

{\noindent Now, consider some $X\!\in\! {\cal A}^*_{r',j'}$ and $Y\!\subseteq\! (U\!\setminus\! X)$ such that $|Y|\!\leq\! k\!-\!j'$. Since $X\!\in\! {\cal A}^*_{r',j'}$, either $(X\setminus\{s_{r'}\})\!\in\! {\cal A}^*_{r'\!-\!1,j'\!-\!1}$ or $X\!\in\! {\cal A}^*_{r'\!-\!1,j'}$. In the first case, by the induction hypothesis, there is $\widehat{X}\!\in\! {\cal A}_{r'\!-\!1,j'\!-\!1}$ such that $\widehat{X}\!\cap\!(Y\!\cup\!\{s_{r'}\})\!=\!\emptyset$. Then, $(\widehat{X}\!\cup\!\{s_{r'}\})\!\in\!{\cal A}_{r',j'}$, and $(\widehat{X}\!\cup\!\{s_{r'}\})\cap Y\!=\!\emptyset$. In the second case, by the induction hypothesis, there is $\widehat{X}\!\in\! {\cal A}_{r'\!-\!1,j'}\!\subseteq\!{\cal A}_{r',j'}$ such that $\widehat{X}\!\cap\! Y\!=\!\emptyset$. Thus, ${\cal B}$ indeed represents ${\cal A}^*_{r',j'}$.}\qed
\end{proof}


\vspace{-0.25em}

\mysection{Proof of Lemma \ref{lemma:coveralg}}\label{app:coveralg}

\vspace{-0.4em}

By Step \ref{pcalg:2}, the claim holds for all $1\!\leq\! j\!\leq\! k$ and $0\!\leq\! \ell\!\leq\! k$ when $i\!=\!0$, and for all $0\!\leq\!i\!\leq\! m$ and $1\!\leq\! j\!\leq\! k$ when $\ell\!=\!0$, or $i\!<\!\ell\!\leq\! k$. Next, consider some $1\!\leq\! i\!\leq\! m, 1\!\leq\! j\!\leq\! k$ and $1\!\leq\! \ell\!\leq\! \min\{i,\!k\}$, and assume that the claim holds for $i\!-\!1$, and all $1\!\leq\! j'\!\leq\! j$ and $0\!\leq\! \ell'\!\leq\! \ell$.

Let $Sol^*_{i,j,\ell}$ be the family of partial solutions including $j$ elements, obtained from a subfamily ${\cal S}'$ that includes $S_i$ and $(\ell\!-\!1)$ sets in $\{S_1,\!...,S_{i-1}\}$, i.e., $Sol^*_{i,j,\ell}\! = \!\{S\!\subseteq\!(\bigcup{\cal S'})\!: \{S_i\}\!\subseteq\!{\cal S'}\!\subseteq\!\{S_1,\!...,S_i\}, |S|\!=\!j, |{\cal S'}|\!=\!\ell\}$. We now show that ${\cal A}^*_{i,j,\ell}$ represents $Sol^*_{i,j,\ell}$, which, by Lemma \ref{lemma:coveradd} and Observation \ref{obs:transitive}, implies that ${\cal A}_{r,j}$ represents $Sol^*_{i,j,\ell}$. First, the induction hypothesis implies that $(\bigcup_{1\leq j'\leq j}\mathrm{M}[i\!-\!1,\!j',\!\ell\!-\!1])\!\subseteq\!(\bigcup_{1\leq j'\leq j}Sol_{i\!-\!1,j',\ell\!-\!1})$, and thus ${\cal A}^*_{i,j,\ell} \!\subseteq\! Sol^*_{i,j,\ell}$. Second, consider a pair of sets $X\!\in\! Sol^*_{i,j,\ell}$, and $Y\!\subseteq\! (U\!\setminus\! X)$ such that $|Y|\!\leq\! k\!-\!j$. Since $X\!\in\! Sol^*_{i,j,\ell}$, there are $X'\!\in\! (\bigcup_{j'=1}^jSol_{i\!-\!1,j',\ell\!-\!1})\!\cup\!\{\emptyset\}$, and $S\!\subseteq\! S_i$ such that $X\!=\!X'\!\cup\! S$. If $X'\!=\!\emptyset$, then $X\!\in\!{\cal A}^*_{i,j,\ell}$. Else, by the induction hypothesis, there is $\widehat{X}'\!\in\!\mathrm{M}[i\!-\!1,\!|X'|,\!\ell\!-\!1]$ such that $\widehat{X}'\!\cap\! (Y\!\cup\!(S\!\setminus\! X')) \!=\! \emptyset$. We get that $(\widehat{X}'\!\cup\! (S\!\setminus\! X'))\!\in\! {\cal A}^*_{i,j,\ell}$, and $(\widehat{X}'\!\cup\! (S\!\setminus\! X'))\!\cap\! Y\!=\!\emptyset$. Thus, ${\cal A}^*_{i,j,\ell}$ represents $Sol^*_{i,j,\ell}$.

By the induction hypothesis, $\mathrm{M}[i\!-\!1,\!j,\!\ell]\!\cup\!{\cal A}_{r,j}\!\subseteq Sol_{i\!-\!1,j,\ell}\!\cup\! Sol^*_{i,j,\ell}\!\subseteq Sol_{i,j,\ell}$. Now, consider a pair of sets $X\!\in\! Sol_{i,j,\ell}$, and $Y\!\subseteq\! (U\!\setminus\! X)$ such that $|Y|\!\leq\! k\!-\!j$. Since $X\!\in\! Sol_{i,j,\ell}$, either $X\in Sol_{i\!-\!1,j,\ell}$ or $X\in Sol^*_{i,j,\ell}$. In the first case, the induction hypothesis implies that there is $\widehat{X}\!\in\! \mathrm{M}[i\!-\!1,\!j,\!\ell]$ such that $\widehat{X}\!\cap\! Y\!=\!\emptyset$. In the second case, since ${\cal A}_{r,j}$ represents $Sol^*_{i,j,\ell}$, there is $\widehat{X}\!\in\!{\cal A}_{r,j}$ such that $\widehat{X}\!\cap\! Y\!=\!\emptyset$. We get that $\mathrm{M}[i\!-\!1,\!j,\!\ell]\!\cup\!{\cal A}_{r,j}$ represents $Sol_{i,j,\ell}$. Thus, by Observation \ref{obs:transitive}, M$[i,\!j,\!\ell]$ represents $Sol_{i,j,\ell}$.\qed

\mysection{Proof of Lemma \ref{lemma:kltreecor}}\label{app:lemmakltreecor}

We use in the proof below the next result, implicitly given in \cite{representative}.

\begin{mylemma}\label{lemma:treediv}
For any out-tree $T$ of $G$ rooted at $v$, containing at least three nodes, in which $v\!=\!u$ or $u$ is a leaf, there exists $C\!\in\!{\cal C}_{v,u}$ with whom $T$ complies.
\end{mylemma}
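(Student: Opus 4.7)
The plan is to construct the desired $C$ explicitly, distinguishing two cases based on the size of $T$. The easy case is $|V_T|\!\leq\! 4d$: take $C\!:=\!T$. Since $|V_T|\!\geq\!3$ the size bounds $3\!\leq\!|V_C|\!\leq\!4d$ are satisfied; the root of $C$ is $v$ and, by hypothesis, $v\!=\!u$ or $u$ is a leaf of $C\!=\!T$, so $C\!\in\!{\cal C}_{v,u}$. Compliance conditions (1) and (2) are immediate because $C$ and $T$ coincide, and (3) is vacuous since the forest $T\!-\!V_C$ is empty.

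In the harder case $|V_T|\!>\!4d$, I construct $V_C$ in three phases. Initialize $V_C\!:=\!\{v,u\}$ (just $\{v\}$ when $v\!=\!u$); this already fulfils the root/required-leaf requirement of ${\cal C}_{v,u}$. \textbf{Phase A} enforces the size bound via a centroid-style decomposition: as long as some component of $T\!-\!V_C$ has more than $(k\!+\!t)/d$ nodes, insert a centroid of that component into $V_C$; a standard halving argument bounds the Phase-A insertions by $O(d)$. \textbf{Phase B} handles over-connected components: if a component of $T\!-\!V_C$ has more than two neighbors in $V_C$, then it contains an internal branching node whose insertion splits it into pieces each touching at most two $V_C$-nodes, and these insertions can be charged against the branchings already paid for in Phase A, giving another $O(d)$ nodes. \textbf{Phase C} performs leaf-padding: for every $w\!\in\!V_C$ that is an internal node of $T$ but has no strict descendant in $V_C$, insert an arbitrary leaf of $T$ from the subtree rooted at $w$; this at most doubles $|V_C|$ and only shrinks existing components while contributing exactly one new $V_C$-neighbor to the single component that previously contained the added leaf, preserving the guarantees from Phases A and B.

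With the constants tuned so that Phases A--C contribute at most $4d$ nodes in total, $|V_C|$ lies in $[3,4d]$. Compliance (1) is automatic from the definition of $C$ as the induced ancestor-tree on $V_C$; (2) is delivered by Phase C, since every leaf of $C$ is either a preexisting leaf of $T$ in $V_C$ or was added in Phase C precisely because it is a leaf of $T$; (3) is delivered by Phases A and B. The main obstacle will be the joint calibration of Phases A and B: the naive centroid decomposition controls component size but may create components with many $V_C$-neighbors, while cutting only at branching nodes controls neighbors but not size. The cleanest route is to merge both objectives into a single recursive procedure controlled by a potential that simultaneously tracks ``excess size'' and ``excess degree'', and to verify that Phase C really preserves conditions (1)--(3) -- this uses crucially that every padded node is a leaf of $T$ and therefore attaches to exactly one existing component of $T\!-\!V_C$.
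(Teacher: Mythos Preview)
The paper does not supply its own proof of this lemma; it records the statement and attributes it as ``implicitly given in~\cite{representative}''. There is thus no in-paper argument to compare your attempt against, only the cited source.

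On its own merits, your three-phase plan is a reasonable route to a result of the right shape, and the individual phases are sound in outline: a centroid/peeling pass does yield $O(d)$ separator vertices for the size bound, LCA-based splitting does repair components with too many $V_C$-neighbours, and leaf-padding does enforce condition~(2) while preserving~(3) (your observation that each padded leaf touches exactly one existing component is correct and is the key point there). The gap is precisely the one you flag yourself: the constants. A straightforward execution---roughly $2d$ vertices from Phase~A, up to another $2d$ from the LCA splits in Phase~B, then a near-doubling in Phase~C---lands well above $4d$, and you give no mechanism for bringing the total down to the exact bound required by the definition of ${\cal C}_{v,u}$. The sentence ``with the constants tuned so that Phases A--C contribute at most $4d$ nodes'' is a promise, not an argument. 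You also need to check that $|V_C|\geq 3$ survives the degenerate case where Phases~A and~B add nothing (for instance $v=u$ with every child-subtree of $v$ already of size at most $(k{+}t)/d$). None of this is fatal for the paper's purposes---$d$ is a constant and any fixed multiple of $d$ would do for the running-time analysis---but the lemma as stated, with the hard bound $|V_C|\leq 4d$, is not established by what you have written.
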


{\noindent\bf Proof of Lemma \ref{lemma:kltreecor}.}
By Steps \ref{step:tree1} and \ref{step:tree2}, the lemma holds for all $v,\!u\!\in\! V$, and $i\!=\!\ell\!=\!0$. Now, consider some $v,\!u\!\in\! V$, $0\!\leq\! i\!<\! k$, and $0\!\leq\! \ell\!\leq\! t$ such that $1\!\leq\! i\!+\!\ell$, and assume that the lemma holds for all $v',\!u'\!\in\! V$, $0\!\leq i'\!\leq\! i$, and $0\!\leq\! \ell'\!\leq\! \ell$ such that $i'\!+\!\ell'\!<\!i\!+\!\ell$.

For all $C\!\in\!{\cal C}_{v,u}$, let ${\cal T}_{v,u,i,\ell,C}$ be the set of out-trees in ${\cal T}_{v,u,i,\ell}$ that comply with $C$. Also, let $Sol_{v,u,i,\ell,C} \!=\! \{V_T\setminus\{v,u\}:T\!\in\!{\cal T}_{v,u,i,\ell,C}\}$. In order to prove the inductive claim, we need the following claim.
\begin{cla}\label{claim:kl1}
For all $C\!\in\! {\cal C}_{v,u}$, N$[C]$ represents $Sol_{v,u,i,\ell,C}$.
\end{cla}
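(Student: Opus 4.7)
\textbf{Proof plan for Claim \ref{claim:kl1}.} My strategy is to first establish, by a nested induction on the triple $(j, i', \ell')$ processed inside the loop of Step~\ref{step:tree10}, the following auxiliary statement: for every $(j, i', \ell')$ with $1 \le j \le |V_C|$, $0 \le i' \le i^*$ and $0 \le \ell' \le \ell^*$ satisfying the ``small-piece'' constraint implicit in the algorithm, the family $\mathrm{L}[j, i', \ell']$ represents the family $\{V_{F'} \setminus V_C : F' \in {\cal F}_{v,u,C,j,i',\ell'}\}$. Since ${\cal F}_{v,u,C,|V_C|,i^*,\ell^*}$ is precisely the subset of ${\cal T}_{v,u,i,\ell}$ of out-trees complying with $C$, the identity in Step~\ref{step:tree14}, N$[C] = \{U \cup (V_C \setminus \{v,u\}) : U \in \mathrm{L}[|V_C|, i^*, \ell^*]\}$, then yields the claim: a blocking set $Y$ for some $V_T \setminus \{v,u\}$ in $Sol_{v,u,i,\ell,C}$ restricted to $V \setminus (V_C \setminus \{v,u\})$ is a blocking set for the corresponding forest part $V_T \setminus V_C$ in $\mathrm{L}[|V_C|,i^*,\ell^*]$, and adjoining $V_C \setminus \{v,u\}$ back is disjoint from $Y$ by construction.

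The base case $j=1$ uses Step~\ref{step:tree9}: $\mathrm{L}[1,i',\ell']$ is obtained by filtering $\mathrm{M}[v,v,i',\ell']$ to those sets disjoint from $V_C$. By the outer inductive hypothesis of Lemma~\ref{lemma:kltreecor} (applicable since $i'+\ell' \le (k+t)/d < i+\ell$ for suitable $d$; the trivial case $i+\ell \le (k+t)/d$ is absorbed because only small $C$'s arise there), $\mathrm{M}[v,v,i',\ell']$ represents $Sol_{v,v,i',\ell'}$. A given $F' \in {\cal F}_{v,u,C,1,i',\ell'}$ is an out-tree rooted at $v$ with the specified internal/leaf counts, and its node set $V_{F'}$ is disjoint from $V_C \setminus \{v\}$ by condition (1) of compliance; hence a representative $\widehat{X}$ in $\mathrm{M}[v,v,i',\ell']$ avoiding the augmented blocking set $Y \cup (V_C \setminus \{v\})$ survives the filter and witnesses representation.

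For the inductive step at $(j, i', \ell')$, consider any $F' \in {\cal F}_{v,u,C,j,i',\ell'}$ and a blocking set $Y$ of allowed size. Split $F'$ into the pieces attached at $w_j$: every connected subtree of $F' \setminus V_F$ (where $F$ is the subforest of $C$ induced by $\{w_1,\dots,w_j\}$) that is adjacent to $w_j$ is either the unique descendant subtree rooted at $w_j$ in $F'$ (handled by case~(1) of Step~\ref{step:tree11}, using $\mathrm{M}[f_C(w_j), w_j, i'', \ell'']$ with $w_j$ as a leaf) or an additional side-subtree hanging below $w_j$ as an internal node (handled by case~(2), using $\mathrm{M}[w_j, w_j, i'', \ell'']$ with $\ell'' \ge 1$, since every such subtree contributes at least one leaf). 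Peeling one such subtree off $F'$ leaves a forest belonging to either ${\cal F}_{v,u,C,j-1,\cdot,\cdot}$ (case~1) or ${\cal F}_{v,u,C,j,i'-i'',\ell'-\ell''}$ (case~2, where $i''+\ell'' \ge 1$ strictly decreases the index). By the outer hypothesis on $\mathrm{M}$ and the inner hypothesis on $\mathrm{L}[j-1,\cdot,\cdot]$ or $\mathrm{L}[j, i'-i'', \ell'-\ell'']$, disjoint witnesses exist and their union lies in ${\cal A}$ and avoids $Y$; hence ${\cal A}$ represents ${\cal F}_{v,u,C,j,i',\ell'}$. Finally, \alg{RepAlg} at Step~\ref{step:tree12} preserves representation, and Observation~\ref{obs:transitive} transfers it to $\mathrm{L}[j,i',\ell']$.

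The main obstacle I expect is the decomposition argument at the inductive step: one must verify that the ``peel one subtree at $w_j$'' operation sets up a correspondence between compliant forests on $\{w_1,\dots,w_j\}$ and pairs (subtree at $w_j$, compliant forest on $\{w_1,\dots,w_{j-1}\}$ or on $\{w_1,\dots,w_j\}$ with smaller index) that both covers every element of ${\cal F}_{v,u,C,j,i',\ell'}$ and produces only compliant elements when run backward. In particular, the three compliance conditions with $C$ (ancestor relations, $C$-leaves are $F'$-leaves, and each removed component has $\le (k+t)/d$ nodes and $\le 2$ neighbors in $V_C$) must be shown to correspond exactly to the size restrictions $i''+\ell'' \le (k+t)/d$ imposed in Step~\ref{step:tree11} and to the restriction to $\mathrm{M}[f_C(w_j), w_j, \cdot, \cdot]$ and $\mathrm{M}[w_j, w_j, \cdot, \cdot]$ in the two cases.
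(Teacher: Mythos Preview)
Your overall architecture is right and matches the paper: reduce Claim~\ref{claim:kl1} to an auxiliary inductive statement about the entries $\mathrm{L}[j,i',\ell']$, prove that by induction on $(j,i',\ell')$ using the outer hypothesis on $\mathrm{M}$, and then read off $\mathrm{N}[C]$ via Step~\ref{step:tree14}. The decomposition you describe in the inductive step (peeling a subtree at $w_j$ and landing in case~(1) or case~(2) of Step~\ref{step:tree11}) is also the right picture.

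There is, however, a genuine gap in the formulation of your auxiliary statement. You assert that $\mathrm{L}[j,i',\ell']$ represents $\{V_{F'}\setminus V_C : F'\in{\cal F}_{v,u,C,j,i',\ell'}\}$ as families of $(i'+\ell')$-sets in $U_{|V|,k+t}$. This is too strong, and your base case already breaks: you take a blocking set $Y$ with $|Y|\le (k{+}t)-(i'{+}\ell')$ and then apply the representation property of $\mathrm{M}[v,v,i',\ell']$ to the \emph{augmented} set $Y\cup(V_C\setminus\{v\})$. But $|V_C|\ge 3$, so this augmented set can have size strictly larger than $(k{+}t)-(i'{+}\ell')$, and the representation guarantee for $\mathrm{M}[v,v,i',\ell']$ simply does not apply. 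The same overflow occurs in the inductive step whenever you try to force disjointness from $V_C$ by enlarging the blocking set. In fact the strong statement need not even be true: after filtering $\mathrm{M}[v,v,i',\ell']$ to sets disjoint from $V_C$, nothing guarantees a survivor for every admissible $Y$ of the full size $(k{+}t)-(i'{+}\ell')$.

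The fix (and this is exactly what the paper does) is to carry $V_C\setminus\{v,u\}$ along in the represented sets: prove instead that
\[
\{U\cup(V_C\setminus\{v,u\}):U\in\mathrm{L}[j,i',\ell']\}\ \text{represents}\ \{U\cup(V_C\setminus\{v,u\}):U\in Sol_{v,u,C,j,i',\ell'}\},
\]
together with $\mathrm{L}[j,i',\ell']\subseteq Sol_{v,u,C,j,i',\ell'}$. Now the sets have size $i'+\ell'+|V_C\setminus\{v,u\}|$, so blocking sets have size at most $(k{+}t)-(i'{+}\ell')-|V_C\setminus\{v,u\}|$, which leaves exactly the headroom you need to absorb $V_C$ (or pieces of it, and the complementary $U$/$W$ part) when invoking the hypotheses on $\mathrm{M}$ and on earlier $\mathrm{L}$-entries. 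With this corrected invariant your base case and inductive step go through as you outlined, and the deduction of Claim~\ref{claim:kl1} from the case $j=|V_C|,\,i'=i^*,\,\ell'=\ell^*$ is immediate.
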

We first show that Claim \ref{claim:kl1} implies the correctness of the inductive claim. By Observation \ref{obs:transitive}, it is enough to prove that ${\cal B} = \bigcup_{C\in{\cal C}_{v,u}}\mathrm{N}[C]$ represents $Sol_{v,u,i,\ell}$. By Claim \ref{claim:kl1}, ${\cal B}\!\subseteq\! \bigcup_{C\in{\cal C}_{v,u}}Sol_{v,u,i,\ell,C}\!\subseteq\! Sol_{v,u,i,\ell}$. Consider some $X\!\in\! Sol_{v,u,i,\ell}$, and $Y\!\subseteq\! (V\setminus X)$ such that $|Y|\!\leq\! k\!+\!t\!-\!(i\!+\!\ell)$. Since $X\!\in\! Sol_{v,u,i,\ell}$, there is an out-tree $T\!\in\! {\cal T}_{v,u,i,\ell}$ whose node set is $X\!\cup\!\{v,u\}$. By Lemma \ref{lemma:treediv}, there is $C\!\in\! {\cal C}_{v,u}$ such that $T\!\in\! {\cal T}_{v,u,i,\ell,C}$. We get that $X\!\in\! Sol_{v,u,i,\ell,C}$. By Claim \ref{claim:kl1}, there is $\widehat{X}\!\in\!\mathrm{N}[C]\!\subseteq\! {\cal B}$ such that $\widehat{X}\!\cap\! Y\!=\!\emptyset$. Thus, ${\cal B}$ represents $Sol_{v,u,i,\ell}$.

We now turn to the proof of Claim~\ref{claim:kl1}. Consider an iteration of Step \ref{step:tree6} that corresponds to some rooted tree $C\!\in\! {\cal C}_{v,u}$. We first note that, formally, a subforest $F'$ of $G$ complies with a subforest $F$ of $C$ if: (1) they have the same roots, (2) $\forall v',\!u'\!\in\!V_F$, $v'$ is an ancestor of $u'$ in $F$ iff $v'$ is an ancestor of $u'$ in $F'$, (3) the leaves in $C$ from $V_F$ are leaves in $F'$, and (4) in the forest obtained by removing $V_F$ from $F'$, each tree has at most $\frac{k\!+\!t}{d}$ nodes and at most two neighbors in $F'$ from $V_F$. Recall that the set ${\cal F}_{v,u,C,j,i',\ell'}$ includes each subforest $F'$ of $G$ that complies with the subforest $F$ of $C$ induced by $\{w_1,\ldots,w_j\}$, such that: (1) $V_{F'}\!\cap\!(V_C\setminus V_F)\!=\!\emptyset$, and (2) the number of internal nodes (leaves) in $V_{F'}$, excluding nodes in $V_F$, is $i'$ ($\ell'$). We denote $Sol_{v,u,C,j,i',\ell'} \!=\! \{V_F\setminus V_C: F\!\in\! {\cal F}_{v,u,C,j,i',\ell'}\}$. In order to prove Claim \ref{claim:kl1}, we need the following claim.

\begin{cla}\label{claim:kl2}
For all $1\!\leq\! j\!\leq\! |V_C|$, $0\!\leq\! i'\!\leq\! i^*$ and $0\!\leq\! \ell'\!\leq\! \ell^*$, we have that $\mathrm{L}[j,\!i',\!\ell']\!\subseteq\!Sol_{v,u,C,j,i',\ell'}$ and $\{U\!\cup\!(V_C\setminus\{v,\!u\})\!: U\!\in\!\mathrm{L}[j,\!i',\!\ell']\}$ represents $\{U\!\cup\!(V_C\setminus\{v,\!u\})\!: U\!\in\!Sol_{v,u,C,j,i',\ell'}\}$.
\end{cla}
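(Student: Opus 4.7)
The plan is to prove Claim~\ref{claim:kl2} by induction on $j$, with a secondary induction on $i'+\ell'$ for each fixed $j$. The key observation that drives the whole argument is that every set in $Sol_{v,u,C,j,i',\ell'}$ is disjoint from $V_C$ by definition, so the prefix $Z:=V_C\setminus\{v,u\}$ added in part~(2) is disjoint from every set appearing on either side of the augmentation. Hence the augmented representation in $U_{|V|,k+t}$ can be deduced from an unaugmented representation once part~(1) is in hand, via the substitution $Y\mapsto Y\cup Z$, which preserves the rank bound because $X\cap Z=\emptyset$.

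For the base case $j=1$, the subforest of $C$ induced by $\{w_1\}=\{v\}$ is the single node $v$, and a subforest $F'$ of $G$ complies with it and lies in ${\cal F}_{v,u,C,1,i',\ell'}$ precisely when $F'\in{\cal T}_{v,v,i',\ell'}$ with $V_{F'}\cap V_C=\{v\}$. Thus $Sol_{v,u,C,1,i',\ell'}$ equals the subfamily of $Sol_{v,v,i',\ell'}$ of sets disjoint from $V_C$, which exactly matches the definition of $\mathrm{L}[1,i',\ell']$ in Step~\ref{step:tree9} (the entry is empty when $i'+\ell'>(k+t)/d$, but so is $Sol_{v,u,C,1,i',\ell'}$ by condition~(4) of compliance). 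Since $i(C)+\ell(C)=|V_C|\ge 3$ while $i^*+\ell^*=i+\ell+1+|\{u\}\setminus\{v\}|-(i(C)+\ell(C))$, we have $i'+\ell'\le i^*+\ell^*<i+\ell$, so the outer induction hypothesis of Lemma~\ref{lemma:kltreecor} yields that $\mathrm{M}[v,v,i',\ell']$ represents $Sol_{v,v,i',\ell'}$. Both parts of the claim then follow by the $Y\mapsto Y\cup(V_C\setminus\{v\})$ substitution from the first paragraph.

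For the inductive step $j\ge 2$, I would establish three facts and chain them through Observation~\ref{obs:transitive} applied to the augmented families. First, ${\cal A}\subseteq Sol_{v,u,C,j,i',\ell'}$: a set $U\cup W\in{\cal A}$ from case~(1) of Step~\ref{step:tree11} is the union of an out-tree certified by the outer IH via $\mathrm{M}[f_C(w_j),w_j,i'',\ell'']$, spliced onto a partial subforest certified by the inner IH on $\mathrm{L}[j-1,i'-i'',\ell'-\ell'']$ at the edge $(f_C(w_j),w_j)$ of $C$; case~(2) splices instead at the node $w_j$, using $\mathrm{M}[w_j,w_j,i'',\ell'']$ and $\mathrm{L}[j,i'-i'',\ell'-\ell'']$ (which is an inner-IH call at strictly smaller $i'+\ell'$). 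The disjointness condition $U\cap(W\cup V_C)=\emptyset$ in Step~\ref{step:tree11} ensures that the splice yields a subforest complying with the subforest of $C$ on $\{w_1,\ldots,w_j\}$. Second, the augmented family of ${\cal A}$ represents the augmented family of $Sol_{v,u,C,j,i',\ell'}$: given $X\in Sol_{v,u,C,j,i',\ell'}$ realized by a subforest $F$ complying with $C$ on $\{w_1,\ldots,w_j\}$, isolate the subtree $T^\star$ of $F$ consisting of $w_j$ together with its descendants in $F$ up to (but not crossing) the next node of $V_C$; condition~(4) of compliance bounds $|V_{T^\star}|\le (k+t)/d$. Split $X$ as $X_U\cup X_W$ along $T^\star$ (matching the $U$, $W$ of Step~\ref{step:tree11}), apply the outer IH to replace $X_U$ and the inner IH to replace $X_W$ (both avoiding the enlarged forbidden set $Y\cup Z$), and combine. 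Third, the augmented family of $\mathrm{L}[j,i',\ell']$ represents the augmented family of ${\cal A}$: by Theorem~\ref{theorem:newrep} applied through \alg{RepAlg}, $\mathrm{L}[j,i',\ell']$ represents ${\cal A}$ unaugmented in $U_{|V|,k+t}$, and the first fact combined with the $Y\mapsto Y\cup Z$ trick lifts this to augmented representation.

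The main obstacle is the decomposition used in the second fact of the inductive step: one must verify that splitting $X$ around $w_j$ always yields indices $(i'',\ell'')$ with $i''+\ell''\le (k+t)/d$ (extracted from compliance), that the remaining indices $(i'-i'',\ell'-\ell'')$ correspond either to a valid $\mathrm{L}[j-1,\cdot,\cdot]$ (case~(1)) or to a strictly smaller $\mathrm{L}[j,\cdot,\cdot]$ (case~(2)), and that the two successive IH invocations can be carried out with a common enlarged forbidden set $Y\cup Z$ so that $\widehat{U}\cup\widehat{W}$ actually lies in ${\cal A}$ as described by Step~\ref{step:tree11}. The case analysis is dictated by whether $w_j$ is a leaf in $C$, matching the dichotomy between (1) and (2) in Step~\ref{step:tree11}. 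Once this bookkeeping is in place, the three facts combine via Observation~\ref{obs:transitive} to close the induction and prove the claim.
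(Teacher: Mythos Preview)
Your overall strategy---double induction on $j$ and on $i'+\ell'$, base case via the outer hypothesis on $\mathrm{M}$, inductive step by decomposing along Step~\ref{step:tree11}, and the $Y\mapsto Y\cup(V_C\setminus\{v,u\})$ substitution to pass from unaugmented to augmented representation---is exactly the paper's proof.

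One point needs correcting. The subtree $T^\star$ you describe (``$w_j$ together with its descendants in $F$ up to the next node of $V_C$'') is in fact $w_j$ together with \emph{all} of its descendants in $F'$, since no node of $V_C$ lies strictly below $w_j$ in $F'$; and this $T^\star$ need not have at most $(k+t)/d$ nodes, because compliance condition~(4) bounds each \emph{component} of $F'\setminus V_C$ individually, not their union below $w_j$. The actual split, which is what the paper's characterization encodes, is: if $w_j$ has at least one pendant subtree below it in $F'$ (possible only when $w_j$ is not a leaf in $C$), peel off \emph{one} such subtree and invoke case~(2) of Step~\ref{step:tree11}, landing in $\mathrm{L}[j,\cdot,\cdot]$ at strictly smaller $i'+\ell'$; otherwise $w_j$ is a leaf of $F'$, and one peels off the single component of $F'\setminus V_C$ lying between $f_C(w_j)$ and $w_j$, invoking case~(1) and landing in $\mathrm{L}[j-1,\cdot,\cdot]$. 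So the dichotomy is not ``is $w_j$ a leaf in $C$'' but ``does $w_j$ still have a pendant subtree in $F'$'', and case~(2) may fire several times (via the secondary induction) before case~(1) finally removes $w_j$. With this adjustment your argument goes through and coincides with the paper's.
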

We first show that Claim \ref{claim:kl2} implies the correctness of Claim \ref{claim:kl1}. By Claim \ref{claim:kl2},  N$[C] \!=\!  \{U\!\cup\!(V_C\setminus\{v,\!u\})\!: U\!\in\!\mathrm{L}[|V_C|,\!i^*,\!\ell^*]\}\!\subseteq\! \{U\!\cup\! (V_C\setminus\{v,\!u\})\!: U\!\in\! Sol_{v,u,C,|V_C|,i^*,\ell^*}\} \!=\! Sol_{v,u,i,\ell,C}$. Consider some $X\!\in\! Sol_{v,u,i,\ell,C}$, and $Y\!\subseteq\! (V\setminus X)$ such that $|Y|\!\leq\! k\!+\!t\!-\!(i\!+\!\ell)$. Since $X\!\in\! Sol_{v,u,i,\ell,C}$, $U\!=\!X\setminus V_C\!\in\! Sol_{v,u,C,|V_C|,i^*,\ell^*}$.
By Claim \ref{claim:kl2}, there is $\widehat{U}\!\in\!\mathrm{L}[|V_C|,\!i^*,\!\ell^*]$ such that $\widehat{U}\!\cap\! Y\!=\!\emptyset$. We get that $\widehat{X}\!=\!\widehat{U}\!\cup\!(V_C\setminus \{v,\!u\})\in\mathrm{N}[C]$ and $\widehat{X}\!\cap\! Y\!=\!\emptyset$. Thus, Claim \ref{claim:kl1} is correct.

Finally, we turn to the proof of Claim \ref{claim:kl2}, which by the above arguments, concludes the correctness of the lemma. By Steps \ref{step:tree8} and \ref{step:tree9} and the induction hypothesis for M, the claim holds for all $0\!\leq\! i'\!\leq\! i^*$ and $0\!\leq\! \ell'\!\leq\! \ell^*$ when $j=1$. Now, consider some $0\!\leq\! i'\!\leq\! i^*$, $0\!\leq\! \ell'\!\leq\! \ell^*$ and $2\!\leq\! j\!\leq\! |V_C|$, and assume that the claim holds for all $0\!\leq\! i''\!\leq\! i'$, $0\!\leq\! \ell''\!\leq\! \ell'$ and $1\!\leq\! j'\!\leq\! j$ s.t. $(j'\!<\!j$ or $i''\!+\!\ell''\!<\! i'\!+\!\ell')$. By observation \ref{obs:transitive}, it is enough to prove that $\{U\!\cup\!(V_C\setminus\{v,\!u\})\!:U\!\in\!{\cal A}\}$ represents $\{U\!\cup\!(V_C\setminus\{v,\!u\})\!: U\!\in\!Sol_{v,u,C,j,i',\ell'}\}$.

Note that a set $X$ belongs to $Sol_{v,u,C,j,i',\ell'}$ iff there are sets $U,\!W\!\subseteq\! X$, $0\!\leq\! i''\!\leq\!i'$ and $0\!\leq\! \ell''\!\leq\!\ell'$ satisfying $\displaystyle{i''\!+\!\ell''\!\leq\! \frac{k+t}{d}}$, such that $X\!=\!U\!\cup\! W$, $U\!\cap\! (W\!\cup\! V_C)\!=\!\emptyset$ and at least one of the following conditions holds.
\begin{enumerate}
\item $U\!\in\! Sol_{f(w_j),w_j,i'',\ell''}$ and $W\!\in\! Sol_{v,u,C,j-1,i'-i'',\ell'-\ell''}$.
\item $w_j$ is not a leaf in $C$, $\ell''\!\geq\!1$, $U\!\in\! Sol_{w_j,w_j,i'',\ell''}$ and $W\!\in\! Sol_{v,u,C,j,i'-i'',\ell'-\ell''}$.
\end{enumerate}
Thus, by Step \ref{step:tree11} and the inductive hypotheses for M and L, we get that ${\cal A}\!\subseteq\! Sol_{v,u,C,j,i',\ell'}$. Consider some $X\!\in\! Sol_{v,u,C,j,i',\ell'}$, and $Y\!\subseteq\! V\setminus (X\!\cup\! (V_C\setminus\{v,\!u\}))$ such that $|Y|\!\leq\! k\!+\!t-\!(i'\!+\!\ell'+|V_C\setminus\{v,\!u\}|)$. Since $X\!\in\! Sol_{v,u,C,j,i',\ell'}$, there are $U$, $W$, $i''$ and $j''$ as mentioned above. By the inductive hypotheses for M and L, there are sets $\widehat{U}$ and $\widehat{W}$ such that $\widehat{U}\!\cap\!(\widehat{W}\!\cup\! V_C)\!=\!\emptyset$ and $(\widehat{U}\!\cup\!\widehat{W})\!\cap\!Y\!=\!\emptyset$, for which at least one of the following conditions holds.

\begin{enumerate}
\item $\widehat{U}\!\in\!\mathrm{M}[f(w_j),\!w_j,\!i'',\!\ell'']$ and $\widehat{W}\!\in\!\mathrm{L}[j\!-\!1,\!i'-i'',\!\ell'-\ell'']$.
\item $w_j$ is not a leaf in $C$, $\ell''\!\geq\!1$, $\widehat{U}\!\in\!\mathrm{M}[w_j,\!w_j,\!i'',\!\ell'']$ and $\widehat{W}\!\in\!\mathrm{L}[j,\!i'\!-\!i'',\!\ell'\!-\!\ell'']$.
\end{enumerate}
We get that $\widehat{X}\!=\!\widehat{U}\!\cup\!\widehat{W}\!\in\!{\cal A}$ and $\widehat{X}\!\cap\! Y\!=\!\emptyset$; thus, Claim \ref{claim:kl2} is correct.\qed

\mysection{Improving Known Applications}\label{app:knownapp}

We now show that by simply replacing the computation of representative families of Fomin et al. \cite{representative} with our faster computation, we get improved algorithms for {\sc Long Directed Cycle}, {\sc Weighted $k$-Path} and {\sc Weighted $k$-Tree}.

\mysubsection{Long Directed Cycle}

Given a graph $G\!=\!(V,\!E)$ and a parameter $k\!\in\!\mathbb{N}$, we need to decide if $G$ contains a simple cycle of length at least $k$.

By using the algorithm for {\sc Long Directed Cycle} given in \cite{representative} with our computation scheme, \alg{RepAlg}, we immediately get that {\sc Long Directed Cycle} can be solved in time

\[O(2^{o(k)}|E|\log^2 |V|\cdot\max_{0\leq t\leq k}\left\{\frac{(2ck)^{2k}}{t^t(2ck-t)^{2k-t}}(\frac{2ck}{2ck-t})^{2k-t}\right\}).\]
We choose $c=1.5$. Then, the maximum is obtained at $t=k$. Thus, we solve {\sc Long Directed Cycle} in time $O(6.75^k|E|\log^2|V|)$ (improving the previous $O(8^k|E|\log^2|V|)$ time).

\mysubsection{Weighted $k$-Path}

Given a graph $G\!=\!(V,\!E)$, a function $w\!: E\!\rightarrow\!\mathbb{R}$ and a parameter $k\!\in\!\mathbb{N}$, we need to find the minimal weight of a simple path of length $k$ in $G$.

By using the algorithm for {\sc Weighted $k$-Path} given in \cite{representative} with our computation scheme, \alg{RepAlg}, we immediately get that {\sc Weighted $k$-Path} can be solved in time

\[O(2^{o(k)}|V|\log^2 |V|\cdot\max_{0\leq t\leq k}\left\{\frac{(ck)^{k}}{t^t(ck-t)^{k-t}}(\frac{ck}{ck-t})^{k-t}\right\}).\]
We choose $c=1.447$. Then, the maximum is obtained at $t=\alpha k$, where $\alpha\cong0.55277$. Thus, we solve {\sc Weighted $k$-Path} in time $O(2.61804^k|V|\log^2\!|V|)$ (improving the previous $O(2.85043^k|V|\!\log^2\!|V|)$ time).

\mysubsection{Weighted $k$-Tree}

Given a graph $G\!=\!(V,\!E)$, a function $w\!: E\!\rightarrow\!\mathbb{R}$ and a tree $T$ on $k$ nodes, we need to find the minimal weight of a subtree of $G$ isomorphic to $T$.

By using the algorithm for {\sc Weighted $k$-Tree} given in \cite{representative} with our computation scheme, \alg{RepAlg}, we immediately get that for any $\epsilon > 0$, {\sc Weighted $k$-Tree} can be solved in time

\[O(2^{\epsilon k}(\frac{1+\epsilon}{1-\epsilon})^k|V|^{O(\frac{1}{\epsilon})}\cdot\max_{0\leq t\leq k}\left\{\frac{(ck)^{k}}{t^t(ck-t)^{k-t}}(\frac{ck}{ck-t})^{k-t}\right\}).\]
We choose $c=1.447$. Then, the maximum is obtained at $t=\alpha k$, where $\alpha\cong0.55277$. By choosing a small enough $\epsilon>0$, we can solve {\sc Weighted $k$-Tree} in time $O(2.61804^k|V|^{O(1)})$ (improving the previous $O(2.85043^k|V|^{O(1)})$ time).

\end{document}